\documentclass{hcig}

\usepackage{amsmath} %
\usepackage{amsthm}  
\usepackage{amssymb}
\usepackage{color}
\usepackage{mathtools} 
\usepackage{multirow}
\usepackage{graphicx}
\usepackage{caption}
\usepackage[labelsep=colon]{caption} %
\usepackage{algorithm}
\usepackage{algpseudocode}
\usepackage{booktabs}
\usepackage{adjustbox}
\usepackage{subcaption}
\usepackage{tabularx}
\usepackage{svg}
\usepackage{wrapfig}
\usepackage{fontawesome}
\usepackage{pdfpages}

\usepackage{datetime}
\usepackage{url}
\usepackage{amsfonts}
\usepackage{nicefrac}
\usepackage{microtype}
\usepackage{bm}
\usepackage{mathrsfs}
\usepackage{cite}
\usepackage{enumitem}
\usepackage{hhline}
\usepackage{arydshln}
\usepackage{accents}
\usepackage{trimclip}
\usepackage{dsfont}
\usepackage{bbding}
\usepackage{setspace}
\usepackage{array, pifont}
\usepackage{makecell, multirow}
\usepackage{xcolor,colortbl}
\usepackage{afterpage}
\usepackage{pbox}
\allowdisplaybreaks
\usepackage{cleveref}
\usepackage{hyperref}
\hypersetup{
    colorlinks,
    linkcolor=jhuaccent,
    citecolor=jhuaccent,
    urlcolor=jhu
}

\definecolor{lightgreen}{rgb}{.9,1,.9}

\newcommand{\cmark}{\ding{51}}%
\newcommand{\xmark}{\ding{55}}%
\newcommand{\halfcmark}{\cmark\textsuperscript{\kern-0.6em\raisebox{-0.48ex}{\xmark}}}

\renewcommand{\algorithmiccomment}[1]{\bgroup\hfill{$\triangleright$~#1}\egroup}

\def\NoNumber#1{{\def\alglinenumber##1{}\State #1}\addtocounter{ALG@line}{-1}}

\newcolumntype{L}[1]{>{\raggedright\arraybackslash}p{#1}}
\newcolumntype{C}[1]{>{\centering\arraybackslash}p{#1}}
\newcolumntype{R}[1]{>{\raggedleft\arraybackslash}p{#1}}

\theoremstyle{plain}
\newtheorem{proposition}{Proposition}

\newtheorem{theorem}{Theorem}

\def\defn{\,\triangleq\,}
\def\argmin{\mathop{\mathsf{arg\,min}}} 
\def\argmax{\mathop{\mathsf{arg\,max}}}
\def\lim{\mathop{\mathsf{lim}}} 

\def\prox{\mathsf{prox}}
\def\log{\mathsf{log\,}}

\def\KL{\mathsf{KL}}
\def\div{\mathsf{div}_x}

\def\ebm{{\bm{e}}}

\def\sbm{{\bm{s}}}

\def\xbm{{\bm{x}}}
\def\ybm{{\bm{y}}}
\def\zbm{{\bm{z}}}
\def\zerobm{\bm{0}}

\def\xbmhat{{\widehat{\bm{x}}}}

\def\Abm{{\bm{A}}}
\def\Bbm{{\bm{B}}}

\def\Hbm{{\bm{H}}}
\def\Ibm{{\bm{I}}}

\def\Lbm{{\bm{L}}}

\def\Wbm{{\bm{W}}}

\def\mubm{{\bm{\mu}}}
\def\Sigmabm{{\bm{\Sigma}}}

\def\prior{{\mathsf{prior}}}
\def\post{{\mathsf{post}}}

\def\R{\mathbb{R}}
\def\E{\mathbb{E}}

\def\Xbf{{\mathbf{X}}}

\def\Ncal{{\mathcal{N}}}

\def\Lcal{{\mathcal{L}}}

\def\Mcal{{\mathcal{M}}}

\DeclareMathAlphabet{\mathsfit}{T1}{\sfdefault}{\mddefault}{\sldefault}

\def\Ssfit{{\mathsfit{S}}}

\def\Tsfit{\mathsfit{T}}

\def\epsilon{\varepsilon}

\def\dXt{{d\Xt}}
\def\dBt{{d\Bbm_t}}
\def\Xt{{\xbm_t}}
\def\XT{{\xbm_T}}

\def\Bt{{\Bbm_t}}

\def\sigmat{\sigma_t}
\def\alphat{{\alpha_t}}

\def\Wk{\Wbm_k}

\def\Sk{{\sbm_{k}}}
\def\Xk{{\xbm_k}}

\def\Xkpo{{\xbm_{k+1}}}

\def\sigmak{\sigma_k}

\def\pihat{\widehat{\pi}}

\def\div{\mathsf{div}}
\def\partialt{\partial_t}

\definecolor{limegreen}{rgb}{0.2, 0.8, 0.2}

\begin{document}

\thispagestyle{firstpagestyle}

\title{Generative Translation Priors: Bayesian Imaging with Cross-Modality Image Translation}

\author{Evan Bell$^{1}$, Jiaming Liu$^{2}$, Yifan Chen$^{3}$, Yu Sun$^{1,}$\textsuperscript{\Letter}}
\address{$^1$Johns Hopkins University \quad $^2$Stanford University \quad $^3$University of California, Los Angeles\\\smallskip
{\footnotesize \textsuperscript{\Letter}Corresponding author: {\color{jhu} ysun214@jh.edu}}}

\headertitle{\small\color{jhu}Generative Translation Priors}
\headerauthors{\small\color{jhu}Bell et al.}

\maketitle

\begin{abstract}{
The ability to leverage images from co-available modalities to inform target-domain reconstruction is highly desirable in imaging algorithms. In this work, we introduce \emph{Generative Translation Priors (GTP)}—a Bayesian framework that transforms diffusion-based image-to-image translation models into cross-modality image priors for ill-posed imaging inverse problems.
GTP incorporates target-domain measurements through likelihood guidance, steering the translation process toward the desired posterior distribution. The framework is grounded in a theoretical analysis of the resulting posterior dynamics, which reveals an intrinsic bias introduced by likelihood guidance. We further characterize this bias and derive a ground-truth-free formulation for its estimation, enabling it to serve as a practical metric for assessing posterior sampling quality. 
Building on this analysis, we derive two discretized GTP algorithms based on gradient and proximal likelihood guidance, respectively.
We validate GTP on computed tomography reconstruction with magnetic resonance side information, and on positron emission tomography reconstruction with computed tomography side information. 
Experiments demonstrate that GTP effectively incorporates complementary cross-modality information and achieves high-fidelity reconstruction even under severely undersampled measurements.
\bigskip\\
{\LARGE\faGithub} \,\,\, \raisebox{.25em}{Code is available on GitHub: \href{https://github.com/Hopkins-CIG/GTP}{\textbf{\texttt{github.com/Hopkins-CIG/GTP}}}}}
\end{abstract}

\section{Introduction}

Modern imaging procedures often involve multiple modalities that capture complementary aspects of the same object. 
For example, a patient may undergo both magnetic resonance imaging (MRI) and computed tomography (CT) scans to visualize different organs~\cite{Yan.etal2016sim}, while in scientific imaging, X-ray fluorescence provides elemental composition maps complementary to X-ray ptychography~\cite{Di.etal2016optimization}. 
Exploiting such inter-modal structure for image reconstruction is therefore a desired feature in imaging algorithms. 
However, realizing its full potential demands three key ingredients: \emph{(i)} an \emph{expressive} model of inter-modal dependencies capable of capturing complex relationships; 
\emph{(ii)} sufficient \emph{generality} to apply across diverse modality pairs without customized redesign; 
and \emph{(iii)} a \emph{principled framework} that integrates such dependencies into the reconstruction process in a theoretically rigorous manner. 
Despite growing interest in this direction~\cite{Arridge.etal2021overview}, existing \emph{cross-modality image reconstruction} methods often face a trade-off in satisfying all these criteria.
The objective of this work is to address all three by employing generative image translation models as a \emph{unified interface} for inter-modal dependencies, and formalizing them as \emph{expressive} image priors within a \emph{rigorous Bayesian} framework.

The aim of image reconstruction, or imaging inverse problems, is to recover an unknown image $\xbm \in \R^n$ from incomplete and noisy measurements $\ybm \in \R^m$.
Such problems are inherently \emph{ill-posed}, meaning that the measurements cannot losslessly specify the image.
The Bayesian framework addresses this by combining the measurement \emph{likelihood} $\ell(\ybm|\xbm)$ with a \emph{prior} distribution over the image $p(\xbm)$, rendering reconstruction as inferring the posterior distribution
\begin{equation}
\label{Eq:Posterior}
\pi(\xbm|\ybm) \propto \ell(\ybm|\xbm)\, p(\xbm).
\end{equation}
As the expressivity of the image prior directly determines inference quality, reconstruction methods have evolved from hand-crafted priors, such as those based on sparsity and total variation~\cite{Rudin.etal1992, Beck.Teboulle2009a}, to learning-based priors parameterized by deep neural networks, such as image denoising and artifact-removal networks~\cite{Venkatakrishnan.etal2013, Sreehari.etal2016, Romano.etal2017, Gupta2018, Liu.etal2020, Kamilov.etal2023}.
Recent works have further extended this progression through \emph{diffusion models}~\cite{Ho.etal2020, Song.etal2021score}, enabling generative modeling of more complex image distributions and facilitating posterior sampling for reconstruction~\cite{Chung.etal2023diffusion, Bouman.etal2023generative, Sun.etal2024, Wu.etal2024principled, Xu.etal2024provably, Coeurdoux.etal2024pnpsgs}. 
Despite these advances, existing Bayesian reconstruction methods are formulated from a single-modality perspective, with cross-modality scenarios largely left out of consideration.

\begin{figure*}[!t]
    \centering
    \includegraphics[width=\linewidth]{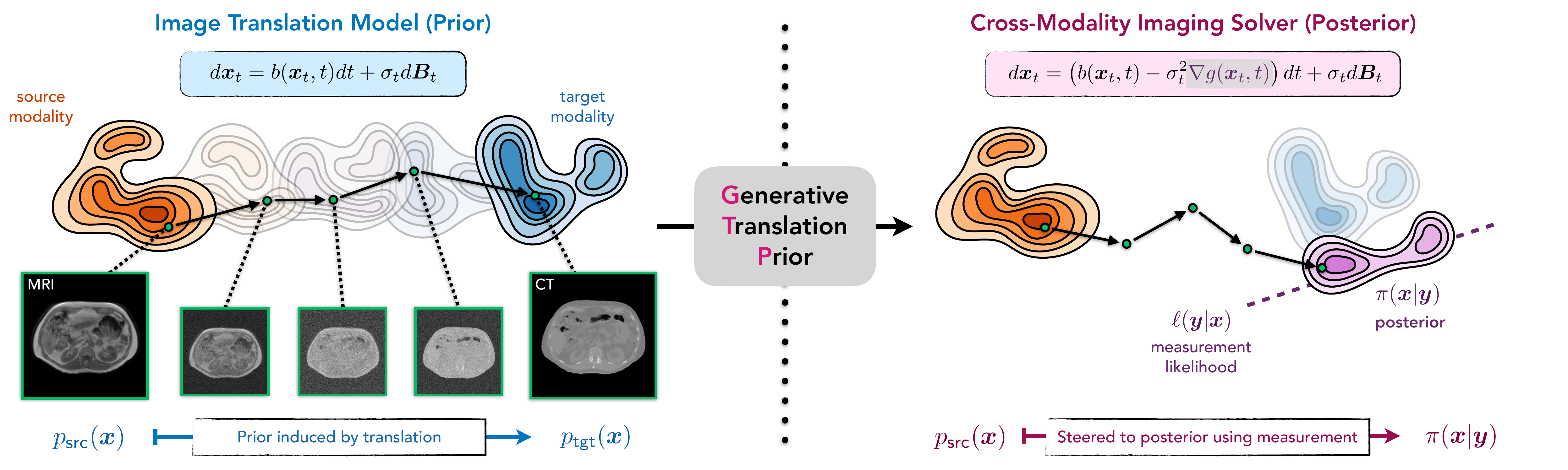}
    \caption{Schematic visualization of the proposed GTP framework. The left panel illustrates a generative image translation model that induces a target-domain prior by stochastically transporting images from a source modality to a target modality (exemplified by MRI-to-CT translation). The right panel illustrates how GTP transforms this translation prior into a posterior sampler by incorporating the measurement likelihood, steering the generative dynamics toward the target posterior distribution.}
    \label{fig:scheme}
\end{figure*}

Exploiting dependencies between co-available modality images provides an alternative paradigm for constructing image priors.
Early efforts focused on \emph{joint reconstruction} for multi-modality imaging systems~\cite{Leahy.etal1991}, where shared information across modalities is exploited through structure-promoting regularizers~\cite{Ehrhardt2023multi}. 
Despite their effectiveness, these dependency models are often hand-crafted or tailored to specific modality pairs.
Advances in \emph{image translation models} have demonstrated that deep networks can learn complex mappings between imaging modalities, providing a flexible and general mechanism for modeling inter-modal dependencies. 
Prior studies have investigated supervised end-to-end image translation~\cite{Han.etal2017mr, Oktay.etal2018attention, Dalmaz.etal2022resvit, Atalik.etal2026trust}, adversarial learning approaches for unpaired training~\cite{Isola.etal2017image, Yi.etal2017dualgan, Zhu.etal2017unpaired}, and their applications across various imaging problems~\cite{Kang.etal2019, Wang.etal2019deep, Kwon2021cycle, Xu.etal2023cross}.
Of particular interest, recent works have formulated image translation within \emph{generalized diffusion models}~\cite{Liu.etal2022, Albergo.etal2023building, Zhou.etal2024denoising, Albergo.etal2025stochastic}, interpreting translation as a probabilistic transport between image distributions and achieving state-of-the-art synthesis quality.
These advances suggest that such translation models could serve as expressive cross-modality image priors. However, a principled Bayesian formulation for integrating such models into image reconstruction remains lacking.

In this paper, we bridge this gap by proposing \emph{Generative Translation Priors (GTP)}, a Bayesian framework that integrates diffusion-based image translation with posterior sampling for cross-modality image reconstruction. 
Starting from a pre-trained translation diffusion process between two imaging modalities, GTP adapts the generative dynamics by injecting a \emph{user-defined, time-varying} likelihood term during sampling. 
This likelihood guidance steers the translation trajectory toward the desired posterior distribution while preserving the learned inter-modal dependencies (see Fig.~\ref{fig:scheme} for an illustration).
Specifically, the key contributions of this work are as follows:
\begin{itemize}
\item We characterize how the dynamics of a prior translation diffusion process, governed by a \emph{stochastic differential equation (SDE)}, are altered by the incorporation of a time-dependent likelihood term. 
In particular, by deriving the time evolution of the resulting probability density, our analysis reveals an intrinsic bias induced and controlled by the likelihood injection. 
Notably, this bias characterization requires \emph{no ground-truth information} and provides a practical criterion for evaluating and selecting among different likelihood guidance schemes to achieve the desired posterior sampling accuracy.
We further note that our analysis naturally extends to the single-modality setting, where the diffusion process evolves from a Gaussian distribution toward the image distribution.
\item We develop two GTP algorithms, termed \emph{Grad-GTP} and \emph{Prox-GTP}, which impose measurement likelihood through gradient- and proximal-based updates, respectively. 
The two algorithms also correspond to the \emph{forward} and \emph{backward Euler} discretizations of the likelihood-adapted GTP process.
We further derive the corresponding ground-truth-free bias estimators for both methods, and numerical results demonstrate a strong correlation between the estimated bias and posterior sampling accuracy.
\item We validate the proposed GTP methods on three cross-modality reconstruction tasks: sparse-view and limited-angle CT with MRI guidance, and positron emission tomography (PET) reconstruction with CT guidance. 
In the PET-CT experiment, we consider the shot noise corresponding to the Poisson likelihood.
Experimental results demonstrate that GTP algorithms achieve high-quality reconstruction across diverse inverse problems.
\end{itemize}

\section{Background}
\label{Sec:Background}

We first review fundamental concepts in Bayesian inverse problems and diffusion models within the single-modality setting. 
We then extend the discussion to image translation, cross-modality imaging, and related methods.

\subsection{Inverse Problems \& Bayesian Inference}
\label{Sec:InverseProblem}
An imaging inverse problem considers recovering an image $\xbm \in \R^n$ given the measurements $\ybm \in \R^m$ that are related by the system
\begin{equation}
\label{Eq:Inverse}
\ybm = \Abm(\xbm)+ \ebm,
\end{equation}
where the measurement operator $\Abm: \R^n \rightarrow \R^m$ models the response of the imaging system, and $\ebm \in \R^m$ represents the measurement noise.
To solve the problem, one popular Bayesian inference framework is based on the \emph{maximum a posteriori (MAP)} estimation
\begin{equation}
\label{Eq:MAP}
\xbmhat = \argmax_{\xbm\in\R^n} \ell(\ybm|\xbm) p(\xbm) 
= \argmin_{\xbm\in\R^n} \big\{ g(\xbm) + h(\xbm) \big\}.
\end{equation}
Here, $g(\xbm) = -\log \ell (\ybm|\xbm)$ is often known as the \textit{data-fidelity} term and $h(\xbm) = -\log p(\xbm)$ as the \textit{regularizer}. 
A common data-fidelity choice is the least-square loss $g(\xbm) = \frac{1}{2\beta^2}\|\ybm - \Abm(\xbm)\|_2^2$, which corresponds to the Gaussian likelihood $\ybm|\xbm \sim \Ncal(\Abm(\xbm), \beta^2I)$ with $\beta^2$ controlling the variance. 
Classic priors include the transform-domain sparsity $h(\xbm)=\tau\|\Lbm\xbm\|_1$~\cite{Candes.etal2006, Donoho2006}, which corresponds to a Laplace prior $\Lbm\xbm \sim \Lcal(0, (2/\tau^2)I)$ with $2/\tau^2$ being the variance and $\Lbm$ a transformation. 

\emph{Proximal methods}~\cite{Boyd.etal2011, Beck.Teboulle2009} provide a powerful framework for solving~\eqref{Eq:MAP} in imaging, particularly when the objective contains structured or non-differentiable terms. 
Central to these methods is the \emph{proximal operator}
\begin{equation}
\label{Eq:ProximalOperator}
\prox_{\lambda f}(\zbm) \defn \argmin_{\xbm \in \R^n} \left\{\frac{1}{2}\|\xbm-\zbm\|_2^2 + \lambda f(\xbm)\right\},
\end{equation}
where the quadratic term encourages the output $\xbm$ to remain close to the input $\zbm$, and $\lambda>0$ controls the influence of a test function $f$. 
For many linear $\Hbm$ and non-differentiable $h$, this subproblem admits a closed-form solution that can be computed efficiently~\cite{Beck.Teboulle2009, Beck.Teboulle2009a, Afonso.etal2010}.
Additionally, the proximal operator is often interpreted as a \emph{backward} gradient step, as the optimality condition of~\eqref{Eq:ProximalOperator} yields $\xbm \in \zbm - \lambda \partial f(\xbm)$, where $\partial f(\xbm)$ denotes the subdifferential at $\xbm$.

The need to characterize reconstruction uncertainty has also stimulated increasing interest in posterior sampling, \emph{i.e.} drawing $\xbm \sim \pi(\xbm|\ybm)$.
This capability is particularly useful in high-stakes applications such as clinical diagnosis~\cite{Begoli.etal2019} and scientific discovery~\cite{Kam.etal2019}. 
Early approaches primarily relied on \emph{Markov chain Monte Carlo (MCMC)} methods combined with simplified image priors, such as sparsity-promoting penalties~\cite{Bardsley2012, Repetti.etal2019}. 
However, the high dimensionality and intricate structure of natural image distributions require much more expressive priors to produce meaningful posterior samples.

\subsection{Diffusion Models as Learned Image Priors}
\label{Sec:DMPrior}
\textit{Diffusion models (DMs)}~\cite{Ho.etal2020, Song.etal2021score} have recently emerged as a transformative paradigm for modeling complex image distributions. 
Their key idea is to learn the reverse dynamics of a diffusion process that gradually maps a clean image $\xbm_0\sim p(\xbm)$ to Gaussian noise $\XT\sim\Ncal(\bm{0},\Ibm)$. 
The reverse process can be modeled by a continuous-time SDE
\begin{equation}
\label{Eq:Reverse}
\dXt = \left[ a(\Xt, t) - \sigma_t^2 \nabla \log p_t (\Xt) \right] dt + \sigmat\, \dBt,
\end{equation}
where $a:\R^n\times[0,T] \rightarrow \R^n$ denotes the \emph{drift field} inherited from the forward process, $\sigmat\in\R$ is the time-dependent diffusion coefficient, and $\Bt\in\R^n$ denotes Brownian motion.
The key quantity in~\eqref{Eq:Reverse} is the \emph{score function} $\nabla \log p_t (\Xt)$, which computes the log-density gradient of the marginal distribution associated with the intermediate state $\Xt\sim p_t(\xbm)$.
Since the true score is generally intractable, DMs learn a neural network approximation $\Ssfit_\theta(\xbm, t) \approx \nabla \log p_t(\xbm)$ using score matching techniques~\cite{Vincent.etal2011, Song.etal2019}.

The availability of a learned score approximation has motivated incorporating diffusion models as image priors in inverse problems.
One popular approach is to adapt~\eqref{Eq:Reverse} towards the posterior distribution by invoking \emph{Bayes' rule} to write the score of the time-dependent posterior $\pi_t$ as
\begin{equation}
\label{Eq:BayeScore}
\nabla \log \pi_t(\xbm_t|\ybm) = \nabla \log \ell_t(\ybm|\xbm_t) + \nabla \log p_t(\xbm_t). \nonumber
\end{equation}
This decomposition allows a pre-trained DM to provide the prior score, while the measurement model enters through the likelihood score $\nabla \log \ell_t(\ybm|\xbm_t)$. 
However, computing the time-dependent likelihood score is generally intractable~\cite{Chung.etal2023diffusion}. 
To address this challenge, existing approaches have pursued two complementary directions. 
One line of work derives approximations to $\nabla \log \ell_t(\ybm|\xbm_t)$ under simplifying assumptions~\cite{Song2022solving, Chung.etal2023diffusion, Song.etal2023pseudo, Boys.etal2024tweedie}. 
Another replaces the likelihood score with empirically designed updates that incorporate measurements directly into the diffusion process~\cite{Wang.etal2022zero, Zhu.etal2023denoising, Kawar.etal2022denoising, Song2024solving, Liu.etal2023, Choi.etal2021ilvr, Rout.etal2023solving, Chung.etal2024decomposed}. 
While these methods have demonstrated remarkable empirical success, they are typically developed from distinct algorithmic viewpoints. 
As a result, the connection between their update rules and the underlying posterior diffusion dynamics remains only partially understood, calling for a unified framework that systematically characterizes the role of likelihood injection.
For a comprehensive comparison and discussion of existing methods, we refer to~\cite{Zheng.etal2025inversebench}.

\subsection{Image Translation via Generalized Diffusion}

Image-to-image translation aims to transform an image from one domain into a corresponding image in another domain while preserving relevant underlying content.
Existing diffusion-based approaches can be broadly organized into two paradigms. 
The first adapts conventional DMs to translation tasks while retaining the standard data-to-Gaussian diffusion framework; examples include methods that learn cross-domain mappings through conditional generation or cycle-consistent training objectives~\cite{Ho.etal2021classifierfree, Ozbey.etal2023unsupervised, Xu.etal2023cyclenet, Zhang.etal2025disdiff}, and inversion-based approaches that derive a corresponding noise representation from the source image and decode it within the target domain~\cite{Meng.etal2022sdedit, Huberman.etal2024edit, Rout.etal2025semantic}.
The second paradigm is represented by \emph{generalized DMs (GDMs)}, such as \emph{diffusion bridges}~\cite{Zhou.etal2024denoising, Zheng.etal2025diffusion} and \emph{stochastic interpolants}~\cite{Albergo.etal2023building, Albergo.etal2025stochastic}, which formulate translation directly as stochastic transport between arbitrary source and target distributions. 
By removing the Gaussian endpoint, these methods provide a more principled framework for modeling cross-domain transport.

Formally, let $\xbm_0\sim p_\mathsf{src}(\xbm)$ and $\XT \sim p_\mathsf{tgt}(\xbm)$ denote the source and target distributions, respectively. 
GDMs describe the transport process through the SDE of the form
\begin{equation}
\label{Eq:GeneralSDE}
\dXt = b(\Xt, t)\,dt + \sigmat\,\dBt,
\end{equation}
where the drift field $b:\R^n\times[0,T] \rightarrow \R^n$ defines the transport dynamics between the two domains.
Different choices of the drift field recover different model instantiations within this general formulation. 
For instance, \emph{denoising diffusion bridge models (DDBM)} employ \emph{Doob's $h$-transform} to explicitly enforce both endpoint distributions~\cite{Zhou.etal2024denoising}, leading to the drift
\begin{equation}
\label{Eq:DDBMdrift}
b(\Xt, t) = a(\Xt, t) - \sigma_t^2\left(\nabla\log p(\Xt | \XT)-\nabla\log p(\XT | \Xt)\right)
\end{equation}
where $\nabla\log(\Xt | \XT)$ and $\nabla\log(\XT | \Xt)$ compute the log-density gradient with respect to $\Xt$ for the backward and forward transition probabilities, respectively.

While GDMs often achieve their best performance with paired training data, recent work has relaxed this requirement by connecting the trajectories of two independently trained GDMs~\cite{Su.etal2022dual}, exploiting the self-similarity properties of Schr\"odinger bridges~\cite{Kim.etal2024unpaired}, and learning modality-invariant latent representations that align source and target domains~\cite{Liu.etal2026unpaired}.
We note that the proposed GTP framework is fully compatible with any GDM characterized by~\eqref{Eq:GeneralSDE}, irrespective of the underlying training paradigm or level of supervision.
Additionally, as~\eqref{Eq:GeneralSDE} includes the reverse diffusion process in~\eqref{Eq:Reverse} as a special case, the GTP framework and its analysis naturally apply to standard DMs.

\subsection{Related Cross-Modality Imaging Methods}
In this section, we briefly review cross-modality imaging methods. As the field has developed from multiple perspectives, existing approaches often emphasize different aspects of the problem.

\paragraph{Joint Optimization} 
Classical cross-modality approaches often arise in multi-modality imaging systems, where images from different modalities are acquired for the same object and jointly reconstructed. 
For two modalities, a joint optimization formulation is given by
\begin{equation}
\label{Eq:JointRecon}
(\xbmhat_1,\xbmhat_2) = \argmin_{\xbm_1,\xbm_2\in\R^n} \left\{ g_1(\xbm_1) + g_2(\xbm_2) + h(\xbm_1,\xbm_2) \right\}, \nonumber
\end{equation}
where $g_1$ and $g_2$ enforce measurement fidelity for the two modalities, while $h(\xbm_1,\xbm_2)$ captures their shared structure and inter-modal dependency. 
Common choices of $h$ include patch similarity~\cite{Yan.etal2016sim}, shared edge constraints~\cite{Li.etal2016edge, Ehrhardt.etal2016, Ehrhardt.etal2016multicontrast}, and co-sparsity~\cite{Castorena.etal2016, Knoll.etal2017, Song.etal2019coupled}; see~\cite{Arridge.etal2021overview,Ehrhardt2023multi} for comprehensive reviews. 
Proximal-based algorithms have also been developed for solving such problems, in combination with alternating minimization~\cite{Yan.etal2016sim, Li.etal2016edge, Wu.etal2018spatial}, primal-dual schemes~\cite{Ehrhardt.etal2019faster, Chambolle.etal2018stochastic, Rasch.etal2018dynamic}, and Bregman distances~\cite{Rasch.etal2018joint}.
These joint optimization frameworks have been applied across a broad range of modality pairs, including MRI-CT~\cite{Yan.etal2016sim, Li.etal2016edge}, PET--CT/MRI~\cite{Ehrhardt.etal2016, Bousse.etal2016maximum, Zhang.etal2018pet}, and photoacoustic--optical coherence tomography (OCT)~\cite{Matthews.etal2017joint}.

\paragraph{Deep Learning Models}
Another major direction treats cross-modality imaging as a supervised image-to-image regression problem. 
These methods learn a direct mapping from a source modality to a target modality using paired training data, typically by minimizing pixel-wise~\cite{Gatys.etal2016} or perceptual~\cite{Johnson.etal2016perceptual} losses between the network prediction and the target image.
Early approaches commonly used convolutional encoder--decoder architectures, such as U-Net~\cite{Rxonneberger.etal2015}, to capture local spatial correspondences across modalities~\cite{Han.etal2017mr, Arabi.etal2018comparative}. 
More recent methods have incorporated attention mechanisms~\cite{Oktay.etal2018attention, Yang.etal2019show, Yao.etal2019attention}, vision transformers~\cite{Dalmaz.etal2022resvit, Deng.etal2022stytr, Peng.etal2023}, and deep unfolding architectures~\cite{Atalik.etal2026trust} to model long-range dependencies and richer cross-modality representations.
Such methods have been widely studied across different areas, including low-level computer vision~\cite{Pang.etal2022} and biomedical imaging~\cite{Dayarathna.etal2024deep}.

\paragraph{Generative Approaches}
Existing research has primarily approached cross-modality imaging as an image-to-image translation problem. 
Early methods relied on \emph{generative adversarial networks (GANs)}~\cite{Nie.etal2018medical,Yang.etal2020mri,Armanious.etal2020medgan}, while more recent studies have explored DMs for biomedical~\cite{Levac.etal2023mri,Efimov.etal2026} and scientific imaging applications~\cite{Efimov.etal2025leveraging}. 
However, these methods are designed for image synthesis and typically do not incorporate measurement information.
To the best of our knowledge, using cross-modality generative models as priors for measurement-driven reconstruction remains largely unexplored, and GTP is developed to address this gap.

\section{Generative Translation Priors}

This section introduces the GTP framework. 
We begin by constructing GTP through likelihood adaptation of a generic translation diffusion process, which reveals the presence of an intrinsic bias. 
We then establish theoretical results characterizing the impact of this bias on posterior sampling accuracy. 
Finally, we discretize the continuous-time dynamics to obtain the Grad-GTP and Prox-GTP algorithms together with their corresponding bias-estimation rules.

\subsection{Construction of GTP from Translation Diffusions}
Consider the generic translation diffusion process described in~\eqref{Eq:GeneralSDE}, where we assume the process has been calibrated such that its terminal distribution at $t = T$ coincides with the desired prior $p(\xbm)$. 
The goal of GTP is to adapt this prior translation process so that it samples from the posterior distribution $\pi(\xbm|\ybm)$. 
Since the prior translation process may not admit an explicit score representation, GTP is constructed directly at the level of marginal density evolution.

The marginal density $p_t(\xbm)$ of the prior process in~\eqref{Eq:GeneralSDE} satisfies the Fokker-Planck equation
\begin{equation}
\label{Eq:FPE}
\partialt p_t(\xbm) = -\div\Big(b(\xbm, t)\,p_t(\xbm)\Big) 
+ \frac{\sigma_t^2}{2}\Delta p_t(\xbm),
\end{equation}
where $\div = \nabla \cdot$ denotes the divergence and $\Delta$ is the Laplace operator. 
To adapt this process toward the posterior, we introduce a time-dependent likelihood potential $g(\xbm, t)$ and define the intermediate likelihood and posterior densities as
\begin{equation}
\label{Eq:Path}
\ell_t(\xbm) \propto e^{-g(\xbm, t)}, \quad 
\pi_t(\xbm) \propto e^{-g(\xbm,t)}\,p_t(\xbm),
\end{equation}
where we omit the conditioning on $\ybm$ for brevity.
To ensure $\pi_T(\xbm)=\pi(\xbm|\ybm)$, the potential must satisfy $g(\xbm,T)=-\log \ell(\ybm|\xbm)$ at the terminal time, while its evolution for $t<T$ may be chosen freely, allowing the likelihood to be introduced gradually along the diffusion path.
The following proposition characterizes how $\pi_t(\xbm)$ evolves over time.

\begin{proposition}
\label{Prop:qPDE}
Define the unnormalized posterior density (omitting the conditioning on $\ybm$ for brevity) as
\begin{equation}\nonumber
q_t(\xbm) = e^{-g(\xbm,t)}\,p_t(\xbm), \quad 
\pi_t(\xbm) = q_t(\xbm)/Z_t,
\end{equation}
where $Z_t = \int q_t(\xbm)\,d\xbm$ is the normalizing constant. 
The time evolution of $\pi_t(\xbm)$ is governed by
\begin{align}
\label{Eq:piPDE}
\partialt \pi_t(\xbm) 
&= -\div\Big(\big(b(\xbm,t) - \sigma_t^2\nabla g(\xbm,t)\big)\,\pi_t(\xbm)\Big) 
+ \frac{\sigma_t^2}{2}\Delta \pi_t(\xbm) \nonumber\\
& \quad + \pi_t(\xbm) \Big( -\partial_t g(\xbm, t) - b(\xbm, t) \cdot \nabla g(\xbm,t) \nonumber\\
& \quad - \frac{\sigma_t^2}{2}\Delta g(\xbm, t) + \frac{\sigma_t^2}{2}\|\nabla g(\xbm, t)\|_2^2 - \partial_t \log Z_t \Big). \nonumber
\end{align}
\end{proposition}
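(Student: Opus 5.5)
The plan is a direct computation: differentiate the product $q_t = e^{-g}p_t$ in time, substitute the Fokker--Planck equation~\eqref{Eq:FPE} for $\partial_t p_t$, and then regroup the result into a transport--diffusion part in $\pi_t$ plus a multiplicative reaction term. Concretely, we first write $\partial_t q_t = e^{-g}\partial_t p_t - (\partial_t g)\,q_t$. Using~\eqref{Eq:FPE} and $p_t = e^{g}q_t$, we get $e^{-g}\partial_t p_t = -e^{-g}\div(b\,e^{g}q_t) + \frac{\sigma_t^2}{2}e^{-g}\Delta(e^{g}q_t)$. Normalization is handled at the end via $\partial_t\pi_t = \partial_t q_t/Z_t - \pi_t\,\partial_t\log Z_t$. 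This accounts for the final $-\partial_t\log Z_t$ term, and every other term is linear in $q_t$, so dividing by $Z_t$ simply replaces $q_t$ by $\pi_t$.

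The two key identities come from expanding the products. For the drift,
\[
e^{-g}\div(b\,e^{g}q) = \div(bq) + (b\cdot\nabla g)\,q .
\]
For the Laplacian, using $\Delta(e^{g}q) = e^{g}\big(\Delta q + 2\nabla g\cdot\nabla q + (\Delta g + \|\nabla g\|^2)q\big)$, we get
\[
e^{-g}\Delta(e^{g}q) = \Delta q + 2\nabla g\cdot\nabla q + (\Delta g+\|\nabla g\|^2)\,q .
\]
The cross term $2\nabla g\cdot\nabla q$ is not in divergence form. We rewrite it as $2\nabla g\cdot\nabla q = 2\div(q\nabla g) - 2(\Delta g)\,q$, so that
\[
\tfrac{\sigma_t^2}{2}e^{-g}\Delta(e^{g}q) = \tfrac{\sigma_t^2}{2}\Delta q + \div(\sigma_t^2\nabla g\,q) + \tfrac{\sigma_t^2}{2}\big(\|\nabla g\|^2 - \Delta g\big)q .
\]

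Collecting terms, the divergence parts combine to $-\div\big((b-\sigma_t^2\nabla g)q\big) + \frac{\sigma_t^2}{2}\Delta q$. The remaining zeroth-order terms are
\[
q\Big(-\partial_t g - b\cdot\nabla g - \tfrac{\sigma_t^2}{2}\Delta g + \tfrac{\sigma_t^2}{2}\|\nabla g\|^2\Big),
\]
which is exactly the bracket in the statement. Dividing by $Z_t$ and subtracting $\pi_t\,\partial_t\log Z_t$ then yields the claimed PDE.

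The main obstacle is the bookkeeping in the Laplacian step. The identity must be split so that the drift correction comes out as $-\sigma_t^2\nabla g$, with a factor $\sigma_t^2$ rather than $\sigma_t^2/2$, while the sign of $\Delta g$ in the reaction term comes out negative. This depends on choosing the divergence-form rewriting of $\nabla g\cdot\nabla q$ above instead of, say, absorbing everything into a $\Delta q$ term. I would verify the signs on the one-dimensional example $g=\frac{c}{2}x^2$ as a sanity check.

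Throughout, I would assume enough smoothness and decay of $p_t$ and $g$ that $Z_t$ is finite and differentiable and all manipulations are pointwise valid. No integration by parts over space is needed except implicitly in defining $\partial_t\log Z_t$.
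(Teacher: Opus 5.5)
Your proposal is correct and follows essentially the same route as the paper's proof. You differentiate $q_t=e^{-g}p_t$, substitute the Fokker--Planck equation, expand $e^{-g}\nabla\cdot(b\,e^{g}q_t)$ and $e^{-g}\Delta(e^{g}q_t)$ by the product rule, and rewrite the cross term $2\nabla g\cdot\nabla q_t$ in divergence form to isolate the $-\sigma_t^2\nabla g$ drift; normalization then enters as $-\pi_t\,\partial_t\log Z_t$, and all signs and coefficients check out.
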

\begin{proof}
See Supplement A.1 for a detailed proof. 
\end{proof}

\noindent
Comparing Proposition~\ref{Prop:qPDE} with~\eqref{Eq:FPE} reveals that the first two terms on the right-hand side admit the Fokker-Planck structure associated with the following SDE
\begin{equation}
\label{Eq:ExAdapt}
d\xbm_t = \Big(b(\xbm_t, t) - \sigma_t^2\,\nabla g(\xbm_t, t)\Big)\,dt 
+ \sigma_t\,d\Bbm_t,
\end{equation}
where the additional drift term $-\sigmat^2 \nabla g(\xbm_t, t)$ steers the prior process toward the posterior. 
This observation naturally suggests simulating the likelihood-adapted dynamics in~\eqref{Eq:ExAdapt} for posterior sampling.
In particular, when $b(\xbm_t, t)$ is instantiated using the reverse diffusion drift, \eqref{Eq:ExAdapt} recovers the SDE underlying the DM-based methods reviewed in~\S\ref{Sec:DMPrior}. 

However, Proposition~\ref{Prop:qPDE} also contains residual terms beyond the Fokker-Planck structure.
These terms constitute the \emph{intrinsic bias} and imply that~\eqref{Eq:ExAdapt} does not, in general, realize the desired posterior density path $(\pi_t)_{t\in[0,T]}$ exactly.
Exactness is achieved only when the likelihood evolution $g(\xbm,t)$ is chosen such that the residual terms vanish; otherwise, a nonzero bias is inevitably introduced.
Our result complements recent Feynman--Kac and particle-based analyses of inference-time diffusion guidance~\cite{Skreta.etal2025feynmankac, Chen.etal2026solving, Delgadino.etal2026diffusion}, while directly characterizing the induced probability dynamics and intrinsic bias in the context of imaging inverse problems.
In the next subsection, we analyze the intrinsic bias in depth and discuss its use within the GTP framework.

\subsection{Theoretical Analysis of the Intrinsic Bias}
A careful inspection of the bias terms in Proposition~\ref{Prop:qPDE} reveals that they depend only on quantities available during sampling and therefore require no access to ground-truth posterior information. 
While it is natural to formulate an optimization problem for finding a likelihood evolution that eliminates the intrinsic bias, the resulting problem amounts to an optimal control problem that is difficult to solve.
Additionally, even evaluating the bias is computationally challenging in high dimensions due to the presence of the Laplacian term $\Delta g(\xbm,t)$. 
Rather than directly optimizing the bias, GTP takes a complementary approach: it first derives a computable formulation of the intrinsic bias and then theoretically quantifies its impact on posterior sampling accuracy. 
Together, these results establish the intrinsic bias as a \emph{viable criterion} for evaluating and selecting likelihood evolution schemes.

\begin{figure*}
\begin{minipage}[t]{.49\textwidth}
\begin{algorithm}[H]
\setstretch{1.1}
\caption{Grad-GTP}\label{Alg:GradGTP}
\begin{algorithmic}[1]
\Require $\xbm_0\in\R^n$, $\epsilon_0=0$, $\alphat=\sigmat^2/2$, $b(\xbm, t)$, $g(\xbm,t)$, and $\mathsf{bias}(\xbm, t, \frac{\sigmat^2}{2})$ in Eq.~\eqref{Eq:piBias}. 
\For{$k=0,\dots,T$}
\State $\Wk \sim \Ncal(0,I)$
\State $\Tsfit(\Xk) \leftarrow b(\Xk,k) - \frac{\sigmat^2}{2} \nabla g(\Xk, k)$
\State $\Xkpo \leftarrow \Xk + \gamma \Tsfit(\Xk) + \sigmak \sqrt{\gamma} \Wk$
\NoNumber{$\triangleright$ \textit{Bias Accumulation}}
\State $\epsilon_{k+1} \leftarrow \epsilon_k + \gamma\,\mathsf{bias}(\xbm, t, \frac{\sigmat^2}{2})$ 
\hfill \textit{Eq.~\eqref{Eq:piBias}}
\EndFor
\end{algorithmic}
\end{algorithm}%
\end{minipage}
\hspace{0.25em}
\begin{minipage}[t]{.5\textwidth}
\begin{algorithm}[H]
\setstretch{1.1}
\caption{Prox-GTP}\label{Alg:ProxGTP}
\begin{algorithmic}[1]
\Require $\xbm_0\in\R^n$, $\epsilon_0=0$, $\alphat=\sigmat^2/2$, $b(\xbm, t)$, $g(\xbm,t)$, and $\mathsf{bias}_\mathsf{prox}(\xbm, t, \frac{\sigmat^2}{2})$ in Eq.~\eqref{Eq:ProxBias}. 
\For{$k = 1, 2, \dots$}
\State $\Wk \sim \Ncal(0,I)$
\State $\Sk \leftarrow \Xk + \gamma b(\Xk,k) + \sigmak \sqrt{\gamma} \Wk$
\State $\Xkpo \leftarrow  \prox_{(\gamma\sigmat^2/2)g}(\Sk)$
\NoNumber{$\triangleright$ \textit{Bias Accumulation}}
\State $\epsilon_{k+1} \leftarrow \epsilon_k + \gamma\,\mathsf{bias}_\mathsf{prox}(\xbm, t, \frac{\sigmat^2}{2})$ \hfill \textit{Eq.~\eqref{Eq:ProxBias}}
\EndFor\label{euclidendwhile}
\end{algorithmic}
\end{algorithm}%
\end{minipage}
\end{figure*}

Proposition~\ref{Prop:qPDE} shows that the time evolution of $\pi_t(\xbm)$ can be 
decomposed into the Fokker-Planck structure of a runnable SDE and a residual intrinsic bias. 
In light of this, consider the following generalization of~\eqref{Eq:ExAdapt}
\begin{equation}
\label{Eq:brutalSDE}
d\Xt = \Big(b(\xbm_t,t) {\color{blue}-\alpha_t}\,\nabla g (\xbm_t, t)\Big)\,dt 
+ \sigma_t\,\dBt.
\end{equation}
The key difference lies in replacing the squared diffusion coefficient $\sigma_t^2$ with a free parameter $\alpha_t$, which introduces an additional degree of freedom that can be exploited to cancel the $\Delta g(\xbm,t)$ term in the bias. 
The result is summarized in the following proposition.

\begin{proposition}
\label{Prop:piPDE}
Given the reference SDE in~\eqref{Eq:brutalSDE}, the time evolution of $\pi_t(\xbm)$ can be written as
\begin{align}
\partialt \pi_t(\xbm) &= \underbrace{-\div\Big(\big(b(\xbm,t) - \alphat\nabla g(\xbm,t)\big)\,\pi_t(\xbm)\Big) + \frac{\sigma_t^2}{2}\Delta \pi_t(\xbm)}_{\text{Fokker-Planck structure of \eqref{Eq:brutalSDE}}} \nonumber\\
& \quad\, + \pi_t(\xbm)\,\Big(\mathsf{bias}(\xbm, t, \alphat) \Big), \nonumber
\end{align}
where the intrinsic bias is defined as
\begin{align}
\label{Eq:piBias}
\mathsf{bias}(\xbm, t, \alphat) = 
& - \partialt g(\xbm,t) - b(\xbm,t) \cdot \nabla g(\xbm,t) \nonumber\\
& + \left(\sigmat^2-\alphat\right)\nabla\log p_t(\xbm) \cdot \nabla g(\xbm,t)  \nonumber \\
& + \left( \frac{\sigmat^2}{2}-\alphat \right)\Delta g(\xbm,t) \nonumber\\
& + \left(\alphat-\frac{\sigmat^2}{2}\right)\|\nabla g(\xbm,t)\|_2^2 \nonumber\\
& - \partialt\log Z_t.
\end{align}
\end{proposition}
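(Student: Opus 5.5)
The plan is to start from Proposition~\ref{Prop:qPDE}, which already gives $\partialt \pi_t$ as the Fokker--Planck operator of~\eqref{Eq:ExAdapt} plus a multiplicative residual. I then rewrite the drift term with the free coefficient $\alphat$ in place of $\sigmat^2$ and push the discrepancy into the residual. Concretely, I write
\begin{equation}\nonumber
-\div\big((b-\sigmat^2\nabla g)\pi_t\big) = -\div\big((b-\alphat\nabla g)\pi_t\big) + (\sigmat^2-\alphat)\,\div(\pi_t\nabla g),
\end{equation}
so the only new work is to express $(\sigmat^2-\alphat)\div(\pi_t\nabla g)$ as $\pi_t$ times a scalar function.

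The key step is the product rule $\div(\pi_t\nabla g) = \pi_t\Delta g + \nabla\pi_t\cdot\nabla g = \pi_t\big(\Delta g + \nabla\log\pi_t\cdot\nabla g\big)$. I then use the definition $\pi_t = e^{-g}p_t/Z_t$. Since $Z_t$ does not depend on $\xbm$, this gives $\nabla\log\pi_t = \nabla\log p_t - \nabla g$. Hence
\begin{equation}\nonumber
(\sigmat^2-\alphat)\div(\pi_t\nabla g) = \pi_t(\sigmat^2-\alphat)\big(\Delta g + \nabla\log p_t\cdot\nabla g - \|\nabla g\|_2^2\big).
\end{equation}

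Adding this to the residual of Proposition~\ref{Prop:qPDE} and collecting coefficients gives the claimed form.
\begin{itemize}
\item The $\Delta g$ coefficient is $-\tfrac{\sigmat^2}{2} + \sigmat^2 - \alphat = \tfrac{\sigmat^2}{2} - \alphat$.
\item The $\|\nabla g\|_2^2$ coefficient is $\tfrac{\sigmat^2}{2} - \sigmat^2 + \alphat = \alphat - \tfrac{\sigmat^2}{2}$.
\item The cross term $(\sigmat^2-\alphat)\nabla\log p_t\cdot\nabla g$ appears as a new term.
\item The terms $-\partialt g$, $-b\cdot\nabla g$ and $-\partialt\log Z_t$ carry over unchanged.
\end{itemize}
This reproduces~\eqref{Eq:piBias} exactly.

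As a sanity check, setting $\alphat=\sigmat^2$ should recover Proposition~\ref{Prop:qPDE}. The only care needed is sign bookkeeping in the divergence identity. I also need to note the regularity assumed implicitly: $g(\cdot,t)$ twice differentiable and $p_t>0$, so that $\nabla\log p_t$ is defined. I do not expect a genuine obstacle, since the argument is an algebraic rearrangement of Proposition~\ref{Prop:qPDE}.
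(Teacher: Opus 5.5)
Your proposal is correct and is essentially the natural argument the paper's framing points to (the supplementary proof itself is not shown, so this comparison rests on the main text): reduce to Proposition~\ref{Prop:qPDE}, move $(\sigmat^2-\alphat)\,\div(\pi_t\nabla g)$ out of the drift and into the multiplicative residual, and expand it via the product rule with $\nabla\log\pi_t=\nabla\log p_t-\nabla g$. Your coefficient bookkeeping then reproduces~\eqref{Eq:piBias} exactly, and your remark that positivity of $p_t$ and twice-differentiability of $g$ are implicitly needed is a fair addition.
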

\begin{proof}
See Supplement A.2 for a detailed proof. 
\end{proof}

\noindent
Proposition~\ref{Prop:piPDE} holds for any choice of $\alpha_t$, offering flexibility in how the bias is structured. 
Setting $\alpha_t = \sigma_t^2$ recovers the bias terms in Proposition~\ref{Prop:qPDE}, where the inner product $\nabla\log p_t(\xbm) \cdot \nabla g(\xbm,t)$ vanishes. 
Setting $\alpha_t = \sigma_t^2/2$ instead retains this inner product but eliminates the Laplacian term $\Delta g(\xbm,t)$ and the gradient norm $\|\nabla g(\xbm,t)\|_2^2$, which is the computationally preferable choice. 
In either case, the inner product $\nabla\log p_t(\xbm) \cdot \nabla g(\xbm,t)$ remains computable by learning the score network $\Ssfit_\theta(\xbm, t) \approx \nabla \log p_t(\xbm)$ of the prior translation process.

We next present a theorem that quantifies the posterior sampling accuracy achieved by simulating~\eqref{Eq:brutalSDE} in terms of the intrinsic bias.
We use the \emph{Kullback-Leibler (KL)} divergence to measure the distance between two distributions, given by
\begin{equation}
\nonumber
\KL(\nu \,\|\, \pi ) = \int_{\R^n} \nu(\xbm)\,\log\frac{\nu(\xbm)}{\pi(\xbm)} \, d\xbm.
\end{equation}

\begin{theorem}
\label{Thm:BiasKL}
Let $(\pi_t)_{t\in[0,T]}$ be the posterior density path defined in~\eqref{Eq:Path}, with $\pi_T(\xbm) = \pi(\xbm|\ybm)$ as the terminal distribution.
Let $(\pihat_t)_{t\in[0,T]}$ be the marginal density path evolving according to~\eqref{Eq:brutalSDE}. 
Assuming $\pihat_0(\xbm) = \pi_0(\xbm)$, we have
\begin{align}
\KL(\pihat_T \,\|\, \pi) \leq
\left| \int_0^T \E_{\pihat_t}\big[\mathsf{bias}(\xbm,t,\alphat)\big] dt \right|. \nonumber
\end{align}
\end{theorem}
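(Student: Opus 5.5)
We track $\frac{d}{dt}\KL(\pihat_t\,\|\,\pi_t)$ along the two density paths and integrate from $0$ to $T$, using $\pihat_0=\pi_0$ so that the initial divergence vanishes. Write $v_t(\xbm) = b(\xbm,t)-\alphat\nabla g(\xbm,t)$ for the drift of~\eqref{Eq:brutalSDE}. Then $\pihat_t$ satisfies the Fokker--Planck equation
\begin{equation}\nonumber
\partialt\pihat_t = -\div(v_t\pihat_t)+\tfrac{\sigmat^2}{2}\Delta\pihat_t ,
\end{equation}
and by Proposition~\ref{Prop:piPDE} the target path satisfies the same equation plus the source term $\pi_t\,\mathsf{bias}(\xbm,t,\alphat)$.

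The first step is to differentiate:
\begin{equation}\nonumber
\tfrac{d}{dt}\KL(\pihat_t\|\pi_t)=\int \partialt\pihat_t\,\log\tfrac{\pihat_t}{\pi_t}\,d\xbm-\int \tfrac{\pihat_t}{\pi_t}\,\partialt\pi_t\,d\xbm .
\end{equation}
The term $\int\partialt\pihat_t\,d\xbm$ vanishes by mass conservation. Substituting both PDEs and integrating by parts (assuming enough decay at infinity), the transport parts combine in the standard way. The drift contributions cancel because both equations share the same $v_t$. The diffusion parts give $-\frac{\sigmat^2}{2}\int\pihat_t\|\nabla\log(\pihat_t/\pi_t)\|_2^2\,d\xbm$, minus the relative Fisher information.

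The remaining source term contributes $-\int\pihat_t\,\mathsf{bias}\,d\xbm = -\E_{\pihat_t}[\mathsf{bias}]$. This yields
\begin{equation}\nonumber
\tfrac{d}{dt}\KL(\pihat_t\|\pi_t)= -\tfrac{\sigmat^2}{2}\,\E_{\pihat_t}\Big[\big\|\nabla\log\tfrac{\pihat_t}{\pi_t}\big\|_2^2\Big]-\E_{\pihat_t}\big[\mathsf{bias}(\xbm,t,\alphat)\big].
\end{equation}

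Next, drop the nonpositive Fisher information term and integrate over $[0,T]$. With $\KL(\pihat_0\|\pi_0)=0$, this gives
\begin{equation}\nonumber
\KL(\pihat_T\|\pi_T)\le -\int_0^T\E_{\pihat_t}[\mathsf{bias}]\,dt\le\Big|\int_0^T\E_{\pihat_t}[\mathsf{bias}]\,dt\Big| .
\end{equation}
Since $\pi_T=\pi(\cdot|\ybm)$, this is exactly the claimed bound.

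The main obstacle is the bookkeeping in the integration by parts. We must verify that the cross terms from $\Delta\pihat_t$ and $\Delta\pi_t$, weighted respectively by $\log(\pihat_t/\pi_t)$ and $\pihat_t/\pi_t$, assemble into the Fisher information with the correct sign. We must also justify the regularity and decay needed to discard boundary terms and differentiate under the integral. I would state these as standing smoothness assumptions.

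We should also check that the $-\partialt\log Z_t$ term in the bias is consistent with $\pi_t$ remaining normalized. Integrating Proposition~\ref{Prop:piPDE} over $\xbm$ shows $\E_{\pi_t}[\mathsf{bias}]=0$, which is automatically satisfied; it is not needed in the bound itself.
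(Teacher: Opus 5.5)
Your proposal is correct and follows essentially the same route the paper's PDE-based framework points to; the supplementary proof itself is not included in the text above, so this comparison rests on that framing. The route is: differentiate $\KL(\pihat_t\,\|\,\pi_t)$ using the Fokker--Planck equation for $\pihat_t$ and the decomposition of Proposition~\ref{Prop:piPDE} for $\pi_t$, let the shared drift terms cancel, drop the resulting nonpositive relative Fisher information term, and integrate from $\pihat_0=\pi_0$. Your integration-by-parts bookkeeping checks out, and the identity you derive shows $-\int_0^T\E_{\pihat_t}[\mathsf{bias}(\xbm,t,\alphat)]\,dt\ge 0$, so the absolute value in the statement is harmless.
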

\begin{proof}
See Supplement A.3 for a detailed proof. 
\end{proof}

\noindent
As shown in Theorem~\ref{Thm:BiasKL}, the final KL error is controlled by the accumulation of the intrinsic bias along the sampling trajectory; reducing $\mathsf{bias}(\xbm,t,\alphat)$ directly yields a tighter guarantee on the posterior sampling accuracy. 
This result justifies using the intrinsic bias as a valid criterion for evaluating and selecting likelihood evolution schemes.
It is also worth noting that the right-hand side implicitly includes the contribution of the time-evolving normalizing constant through the term $\partial_t \log Z_t$. 
Since $\int_0^T \partial_t \log Z_t\,dt=\log Z_T-\log Z_0$,
this contribution is determined entirely by the endpoints and is independent of the intermediate density trajectory. 

\subsection{Discretized Algorithms: Grad-GTP \& Prox-GTP}

We now develop two practical GTP algorithms, termed Grad-GTP and Prox-GTP, by discretizing the SDE in~\eqref{Eq:brutalSDE}. 
The two algorithms differ in how the likelihood term is incorporated. 
In both cases, we fix $\alphat = \sigmat^2/2$ to enable a tractable computation of the intrinsic bias.

Grad-GTP employs a Euler–Maruyama discretization and evaluates all terms at the current iterate. 
With a stepsize $\gamma>0$, the update rule reads
\begin{align}
\label{Eq:GradGTP}
&\Xkpo = \Xk + \gamma\Tsfit(\Xk) + \sigmak\sqrt{\gamma}\,\Wk, \\
&\text{with}\quad \Tsfit(\Xk) = b(\Xk, k) - \frac{\sigmak^2}{2}\nabla g(\Xk, k), \nonumber
\end{align}
where $\Wk = \int_{k}^{k+1} \dBt$ follows the $n$-dimensional standard Gaussian distribution.
Prox-GTP, on the other hand, handles the likelihood term implicitly through a proximal update. 
The method first updates the sample using the prior drift and diffusion terms, and then applies a proximal operator associated with the likelihood
\begin{align}
\label{Eq:ProxGTP}
&\sbm_k = \xbm_k + \gamma\,b(\xbm_k, k) + \sigmak\sqrt{\gamma}\,\Wk, \\
&\Xkpo = \prox_{(\gamma\sigma_k^2/2)\,g}(\sbm_k). \nonumber
\end{align}
The proximal formulation is particularly advantageous when the forward model possesses sufficient structure to admit efficient closed-form or easily computable proximal operators. 
In the special case where the prior process is governed by the Langevin SDE, Prox-GTP coincides with the \emph{proximal Langevin algorithm}~\cite{Wibisono.etal2019proximal, Salim.etal2019stochastic, Ehrhardt.etal2024proximal}.

In both algorithms, the intrinsic bias is accumulated throughout sampling, providing a running estimate of the KL-error bound in Theorem~\ref{Thm:BiasKL}. 
While one can recompute the log-likelihood gradient in $\mathsf{bias}(\xbm,t,\alpha_t)$ for Prox-GTP, this treatment introduces repeated likelihood evaluations. 
Instead, we approximate the likelihood-gradient term by its proximal counterpart using the \emph{Moreau-envelope approximation}~\cite{Moreau1965}
\begin{equation}
\nonumber
\nabla g(\xbm,t) \approx \nabla\Mcal_{(\gamma\sigma_t^2/2)g}(\xbm) \defn \frac{2}{\gamma\sigma_t^2}\Big(\xbm - \prox_{(\gamma\sigma_t^2/2)g}(\xbm)\Big),
\end{equation}
where $\Mcal_{(\gamma\sigma_t^2/2)g}$ is the Moreau-Yosida envelope of $g$.
This substitution expresses the gradient through the proximal residual. 
Consequently, the intrinsic bias can be evaluated directly from the proximal iterates without requiring an explicit gradient of the likelihood potential. 
The corresponding proximal bias takes the form
\begin{align}
\label{Eq:ProxBias}
\mathsf{bias}_\mathsf{prox}(\xbm, t, \frac{\sigmat^2}{2}) = 
& - \partialt g(\xbm,t) - b(\xbm,t) \cdot \nabla\Mcal_{(\gamma\sigma_t^2/2)g}(\xbm) \nonumber\\
& + \frac{\sigmat^2}{2}\nabla\log p_t(\xbm) \cdot \nabla\Mcal_{(\gamma\sigma_t^2/2)g}(\xbm) \nonumber\\
& - \partialt\log Z_t.
\end{align}
Algorithms~\ref{Alg:GradGTP} and~\ref{Alg:ProxGTP} summarize the complete Grad-GTP and Prox-GTP procedures.

\section{Numerical Validations of the Bias Theory}

We first numerically validate the ability of GTP to sample from an image posterior with a known ground truth, and investigate whether the integrated intrinsic bias provides a useful reference-free criterion for assessing posterior sampling accuracy.
We focus on Prox-GTP (Algorithm~\ref{Alg:ProxGTP}) here and provide analogous results for Grad-GTP in Supplement B.

\paragraph{Experimental Setup}
We construct a Gaussian image prior $\Ncal(\mubm_\prior,\Sigmabm_\prior)$, where $\mubm_\prior$ and $\Sigmabm_\prior$ are estimated from 70,000 FFHQ images~\cite{Karras.etal2019} converted to grayscale and resized to $32\times32$.
A ground-truth image is then sampled as $\xbm^*\sim\Ncal(\mubm_\prior,\Sigmabm_\prior)$, and measurements are generated according to~\eqref{Eq:Inverse}, where $\Abm \in \R^{100\times1024}$ is a Gaussian random matrix with normalized rows.
We assume measurement noise to be Gaussian $\ebm\sim\Ncal(\zerobm,\sigma^2\Ibm)$ with $\sigma=0.01$.
With this setup, the posterior is available in closed form as $\Ncal(\mubm_\post,\Sigmabm_\post)$, where
$\mubm_\post = \Sigmabm_\post(\Sigmabm_\prior^{-1}\mubm_\prior + \frac{1}{\sigma^2}\Abm^T\ybm)$
and $\Sigmabm_\post = (\Sigmabm_\prior^{-1} + \frac{1}{\sigma^2}\Abm^T\Abm)^{-1}$.

To closely reflect practical settings, we train a DDBM as the translation prior using samples of the prior with a standard Gaussian as the source distribution.
For sampling, we use the time-varying likelihood 
\begin{equation}
\label{eq:dps_likelihood}
g(\xbm, t) = (1-t)\cdot\frac{1}{2\sigma^2}\|\Abm\hat{\xbm}_0(\xbm, t) - \ybm\|_2^2,  
\end{equation}
where $\hat{\xbm}_0(\xbm, t) \approx \E[\xbm_0|\xbm, t]$ is estimated by the trained DDBM.
This construction closely resembles the likelihood approximation used in DPS~\cite{Chung.etal2023diffusion}, and we compute its proximal by running inner gradient descent.
For DDBMs, $t=0$ corresponds to the target distribution and, by construction of the denoiser, $\hat{\xbm}_0(\xbm,0)=\xbm$.
Therefore, we have $g(\xbm,0)=\frac{1}{2\sigma^2}\|\Abm\xbm-\ybm\|_2^2$, which recovers the exact negative log-likelihood at the target endpoint as required by Theorem~\ref{Thm:BiasKL}.
The regularization strength of Prox-GTP is controlled by $\lambda$ in~\eqref{Eq:ProximalOperator}, with the proximal step size determined through $\lambda=\gamma\sigma_t^2/2$.

\begin{figure}[t!]
    \centering
    \includegraphics[width=0.65\linewidth]{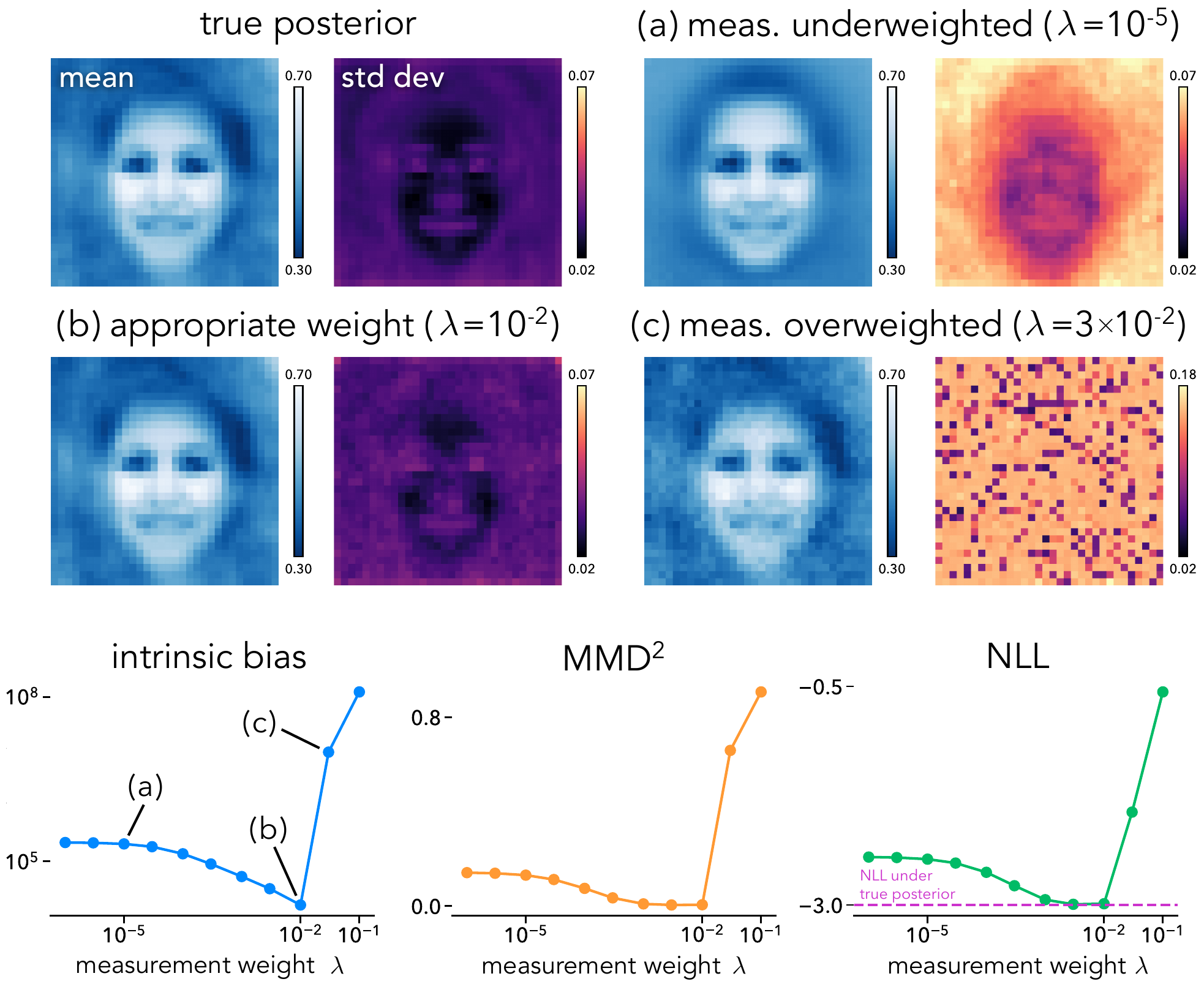}
    \caption{Numerical validation of the intrinsic bias as a reference-free criterion for assessing posterior accuracy. Prox-GTP (Algorithm~\ref{Alg:ProxGTP}) is used to recover the posterior across different likelihood strengths. 
    The recovered posterior mean and pixel-wise standard deviation are shown together with the intrinsic bias (upper bound in Theorem~\ref{Thm:BiasKL}) and two reference-based metrics: \textit{(squared) maximum mean discrepancy (MMD)} and \textit{negative log-likelihood (NLL)}. Note how the intrinsic bias closely tracks the reference-based metrics, detecting both under- and overweighting of the measurement likelihood and reaching its minimum when the recovered posterior best matches the ground truth.}
    \label{fig:gaussian_face}
\end{figure}

\paragraph{Validation Results}
\Cref{fig:gaussian_face} visualizes the mean ($\mubm$) and pixel-wise standard deviation ($\mathsf{SD}$) of the posteriors recovered by Prox-GTP across different values of $\lambda$.
As $\lambda$ increases, the figure further plots the evolution of the intrinsic bias together with two reference-based metrics for assessing posterior accuracy.
The first is the \textit{(squared) maximum mean discrepancy (MMD)}
\begin{align}
\mathsf{MMD}^2(\tilde\Xbf,\Xbf)
=&\frac{1}{m^2}\sum_{i,j=1}^m k(\tilde\xbm_i,\tilde\xbm_j)
+\frac{1}{n^2}\sum_{i,j=1}^n k(\xbm_i,\xbm_j) \nonumber\\
&-\frac{2}{mn}\sum_{i=1}^m\sum_{j=1}^n k(\tilde\xbm_i,\xbm_j), \nonumber
\end{align}
where $\tilde\Xbf=\{\tilde\xbm_i\}_{i=1}^m$ and
$\Xbf=\{\xbm_i\}_{i=1}^n$ denote samples from the recovered and true posteriors, respectively, and $k$ is a Gaussian kernel with bandwidth $\sigma_k=3.0$.
The second is the \textit{negative log-likelihood (NLL)} under an independent pixel-wise Gaussian approximation of the recovered posterior,
\begin{equation*}
\mathsf{NLL}(\xbm^*)=\frac{1}{N}\sum_{i=1}^N\left[\frac{1}{2\mathsf{SD}_{(i)}^2}(\mubm_{(i)}-\xbm_{(i)}^*)^2+\frac{1}{2}\log(2\pi\mathsf{SD}_{(i)}^2)
\right],
\end{equation*}
where $\mubm_{(i)}$ and $\mathsf{SD}_{(i)}$ are the sample mean and standard deviation of the $i$th pixel, respectively.
MMD directly measures the discrepancy between the recovered and true posterior distributions, whereas NLL evaluates how well the recovered posterior represents the ground-truth image.

As shown in~\Cref{fig:gaussian_face}, the intrinsic bias shows a trend closely consistent with both MMD and NLL as $\lambda$ varies.
For small $\lambda$, the proximal updates impose insufficient measurement information, and the recovered posterior remains close to the image prior.
This behavior is evident from both the posterior mean, which deviates from the ground-truth image, and the posterior standard deviation, which largely retains the spatial structure of the prior uncertainty.
Accordingly, MMD and NLL are high in this regime, and the intrinsic bias correctly indicates a large discrepancy.
Increasing $\lambda$ initially improves the recovered posterior by strengthening the influence of the measurement likelihood.
At an intermediate range of $\lambda$, the posterior mean better agrees with the ground truth while the posterior standard deviation retains meaningful spatial variation.
Correspondingly, all three metrics decrease and reach their lowest values in approximately the same regime.
This agreement is particularly important because computing MMD and NLL requires ground-truth information unavailable in practical inverse problems, whereas the intrinsic bias is computed without such references.
When $\lambda$ becomes too large, the proximal updates overemphasize the measurement likelihood and the recovered posterior begins to degenerate.
In particular, the sample standard deviation loses meaningful spatial structure, indicating that the recovered distribution no longer captures the uncertainty of the true posterior.
Both MMD and NLL deteriorate accordingly.
Overall, the intrinsic bias successfully detects both insufficient likelihood guidance at small $\lambda$ and excessive guidance at large $\lambda$, demonstrating the practical relevance of our bias theory.

\begin{figure*}[pt!]
    \centering
    \includegraphics[width=0.95\linewidth]{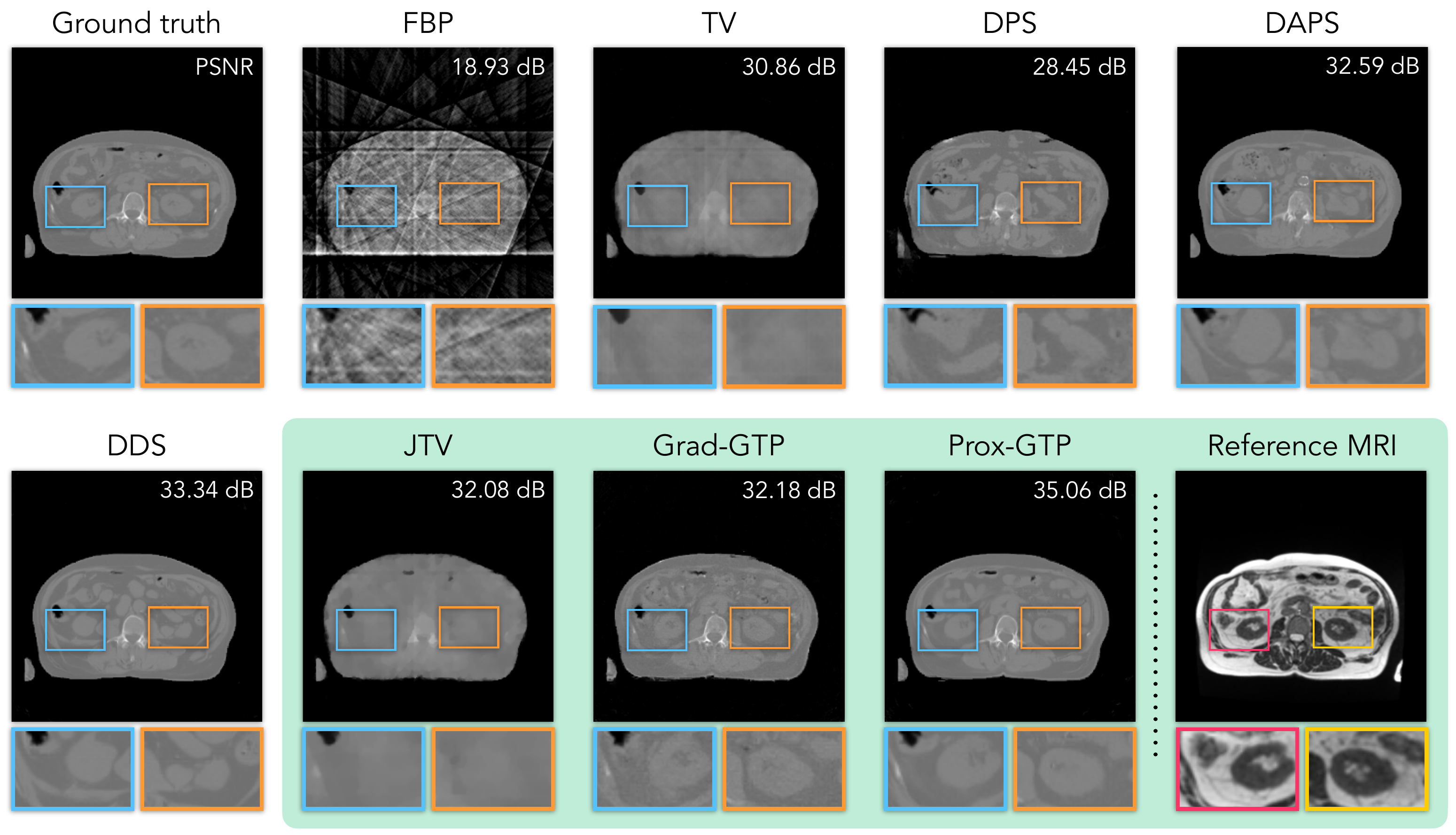}
    \caption{Visual comparison of the proposed GTP methods against baselines for 8-view SV-CT reconstruction. The shaded panels indicate methods that leverage the reference MR image during reconstruction. Zoomed-in regions highlight differences in anatomical detail. Note how the GTP methods more accurately reconstruct important anatomical structures, particularly the kidneys, compared with the baseline methods.}
    \label{fig:sv_ct}

    \vspace{2em}
    \centering
    \captionsetup{type=table}
    \small
    \begin{tabular}{lcccccccc}
    \toprule
    \multirow{2.4}{*}{Method\quad} & \multicolumn{2}{c}{4 views} & \multicolumn{2}{c}{8 views} & \multicolumn{2}{c}{16 views} & \multicolumn{2}{c}{32 views} \\ \cmidrule(lr){2-3} \cmidrule(lr){4-5} \cmidrule(lr){6-7} \cmidrule(lr){8-9}
     & PSNR & SSIM & PSNR & SSIM & PSNR & SSIM & PSNR & SSIM \\
    \midrule
    FBP & 13.81 & 0.333 & 17.70 & 0.292 & 22.28 & 0.290 & 27.41 & 0.373 \\
    TV & 25.97 & 0.858 & 30.22 & 0.901 & 34.13 & 0.932 & 37.61 & 0.953 \\
    JTV & 26.69 & \underline{0.875} & 31.26 & \textbf{0.922} & 35.91 & \textbf{0.954} & \textbf{40.03} & \textbf{0.977} \\
    DPS & 26.72 & 0.790 & 31.49 & 0.848 & 34.04 & 0.895 & 37.06 & 0.945 \\
    DAPS & 27.27 & 0.814 & 31.52 & 0.874 & 34.22 & 0.913 & 34.95 & 0.898 \\
    DDS & 28.34 & 0.835 & \underline{33.75} & 0.898 & \textbf{37.87} & 0.949 & \underline{39.91} & \underline{0.960} \\
    \cdashline{1-9}\noalign{\vskip 3pt}
    Grad-GTP & \underline{29.76} & 0.835 & 32.43 & 0.850 & 35.12 & 0.871 & 37.35 & 0.894 \\
    Prox-GTP & \textbf{30.50} & \textbf{0.885} & \textbf{34.45} & \underline{0.916} & \underline{37.81} & \underline{0.952} & 39.77 & 0.956 \\
    \bottomrule
    \end{tabular}
    \caption{Average PSNR and SSIM results attained by the proposed GTP methods and baselines for four SV-CT settings. In each column, the best result is marked in \textbf{bold}, while the second best is \underline{underlined}. For reference, the translation model used by GTP attains a PSNR of 24.94 dB and an SSIM of 0.839 without measurement guidance.}
    \label{tab:sv_ct_main}
\end{figure*}

\section{Experiments on Cross-Modality Imaging}

We now demonstrate the effectiveness of GTP for cross-modality image reconstruction across different imaging modalities and measurement models.
We first consider sparse-view and limited-angle CT reconstruction using MRI as side-modality guidance.
We then study low-dose PET reconstruction with CT guidance, which is a clinically realistic cross-modality setting since PET and CT are routinely acquired together.
This experiment further extends our evaluation beyond Gaussian measurement models by considering the Poisson noise in PET imaging.

Throughout our experiments, we instantiate the translation priors using DDBMs~\cite{Zhou.etal2024denoising}, whose dynamics naturally fit the SDE formulation in~\eqref{Eq:GeneralSDE} and~\eqref{Eq:DDBMdrift}. 
Each DDBM is trained on paired cross-modal images to model the inter-modal distribution, independently of the downstream inverse problem. 
Further details on the model architectures and training procedures, together with additional results using translation priors trained on unpaired data, are provided in Supplement C.1 and D.1.

\begin{figure*}[pt!]

    \centering
    \centering
    \includegraphics[width=0.95\linewidth]{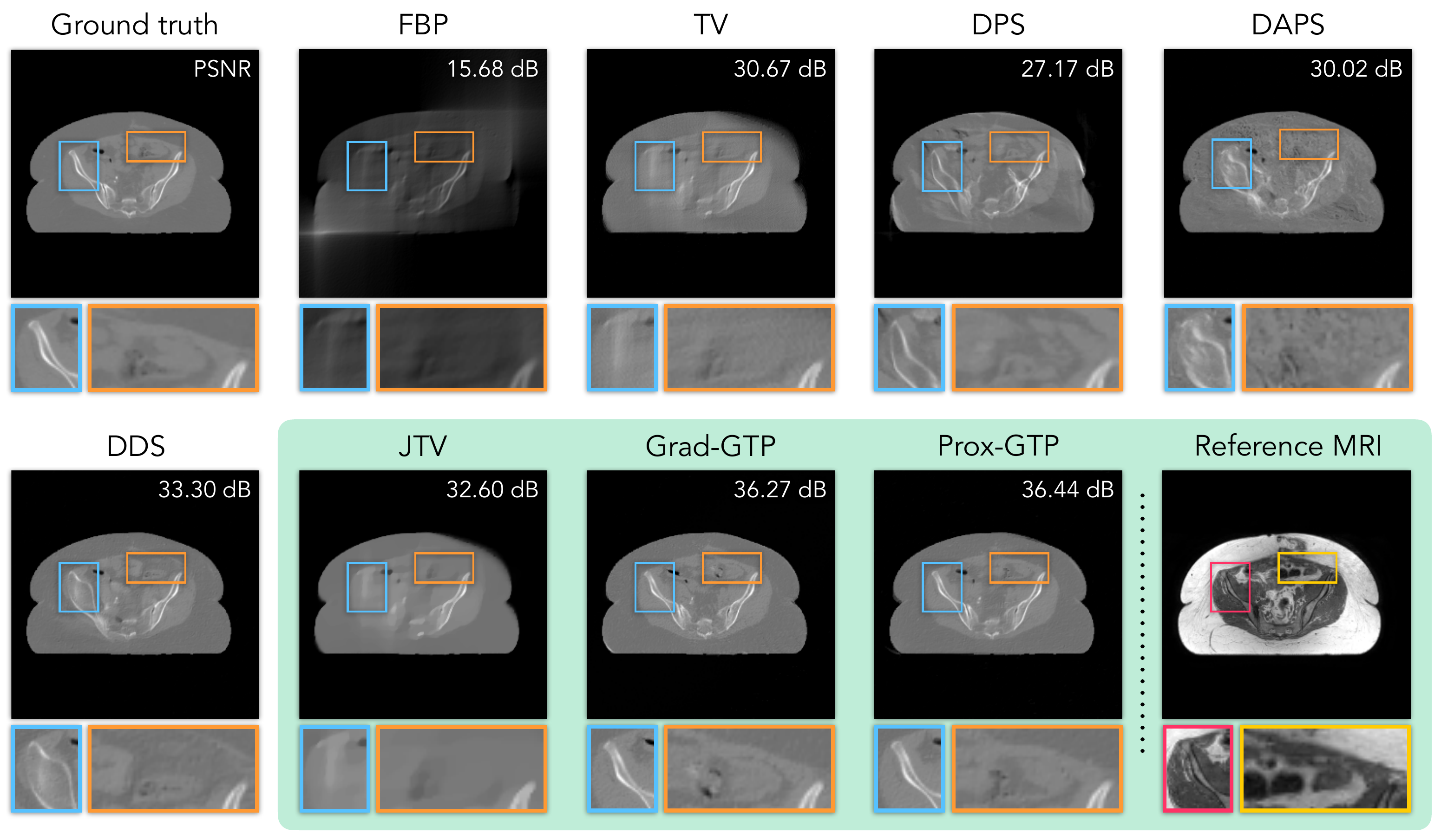}
    \caption{Visual comparison of $90^\circ$ LA-CT reconstructions obtained using the proposed GTP methods and baselines. The shaded panels indicate methods that leverage the reference MR image during reconstruction. Zoomed-in regions highlight differences in anatomical detail. Note how the GTP methods more accurately reconstruct major anatomical structures, particularly bones, that are poorly observed due to the missing angular measurements.}
    \label{fig:la_ct}

    \vspace{2em}
    \captionsetup{type=table}
    \small
    \begin{tabular}{lcccccccc}
    \toprule
    \multirow{2.4}{*}{Method\quad} & \multicolumn{2}{c}{30 degrees} & \multicolumn{2}{c}{60 degrees} & \multicolumn{2}{c}{90 degrees} & \multicolumn{2}{c}{120 degrees} \\ \cmidrule(lr){2-3} \cmidrule(lr){4-5} \cmidrule(lr){6-7} \cmidrule(lr){8-9}
     & PSNR & SSIM & PSNR & SSIM & PSNR & SSIM & PSNR & SSIM \\
    \midrule
    FBP & 12.41 & 0.118 & 13.60 & 0.158 & 16.05 & 0.267 & 20.32 & 0.314 \\
    TV & 19.88 & 0.675 & 23.60 & 0.815 & 29.49 & 0.910 & 33.58 & 0.937 \\
    JTV & 24.00 & 0.815 & 26.79 & \underline{0.867} & 30.81 & \textbf{0.934} & 35.27 & \textbf{0.966} \\
    DPS & 20.45 & 0.685 & 22.15 & 0.731 & 27.14 & 0.835 & 31.82 & 0.900 \\
    DAPS & 19.95 & 0.662 & 20.99 & 0.695 & 30.70 & 0.845 & 33.18 & 0.876 \\
    DDS & 21.00 & 0.703 & 24.54 & 0.769 & \underline{33.23} & 0.907 & \underline{37.51} & 0.937 \\
    \cdashline{1-9}\noalign{\vskip 3pt}
    Grad-GTP & \textbf{28.64} & \underline{0.828} & \underline{29.80} & \textbf{0.893} & 32.81 & \underline{0.919} & 36.62 & 0.914 \\
    Prox-GTP & \underline{28.58} & \textbf{0.843} & \textbf{30.57} & 0.856 & \textbf{35.15} & 0.917 & \textbf{37.73} & \underline{0.965} \\
    \bottomrule
    \end{tabular}
    \caption{Average PSNR and SSIM results attained by the proposed GTP methods and baselines for four LA-CT settings. In each column, the best result is marked in \textbf{bold}, while the second best is \underline{underlined}. For reference, the translation model used by GTP attains a PSNR of 24.94 dB and an SSIM of 0.839 without measurement guidance.}
    \label{tab:la_ct_main}
\end{figure*}

\subsection{CT Reconstruction with MRI Guidance}

We consider both \textit{sparse-view and limited-angle CT (SV/LA-CT)} reconstruction under the linear version of~\eqref{Eq:Inverse}, where $\Abm$ denotes the CT projection operator and $\ebm$ is additive Gaussian noise with an input \textit{signal-to-noise ratio (SNR)} of 50 dB.
For both settings, we adopt a parallel-beam geometry and implement the forward and adjoint operators using LEAP~\cite{kim2023differentiable}.
We use the combined 2023 and 2025 SynthRad datasets~\cite{synthrad2023,synthrad2025}, with volumes randomly split into training, validation, and test sets at an 80/10/10 ratio.
All volumes are resampled to an axial resolution of $256\times256$ using trilinear interpolation, with both CT and MR images normalized to $[-1,1]$.
We use five validation images for hyperparameter tuning and report final results on 30 images from the test set.

We compare GTP against baselines spanning three representative classes: conventional CT reconstruction methods, including \textit{filtered back-projection (FBP)} and \textit{total variation (TV)} regularization~\cite{sidky2008image}; cross-modality reconstruction using \textit{joint total variation (JTV)}~\cite{Ehrhardt.etal2016,sapiro1996anisotropic}, which incorporates the reference MR image through a hand-crafted joint regularizer; and state-of-the-art diffusion-based inverse solvers, including \textit{DPS}~\cite{Chung.etal2023diffusion}, \textit{DDS}~\cite{Chung.etal2024decomposed}, and \textit{DAPS}~\cite{zhang2025improving}.
For the diffusion baselines, we use DDBMs trained with a standard Gaussian source as unconditional image priors. These models employ the same architecture, dataset, and training procedure as the translation model used by our GTP algorithms.
All methods are evaluated using \textit{peak signal-to-noise ratio (PSNR)} and \textit{structural similarity index measure (SSIM)}~\cite{Wang.etal2004}. 
For each reconstruction method, relevant hyperparameters are selected by grid search to maximize the average PSNR over the validation images.
We refer to Supplement C.1 and C.2 for additional implementation and data preprocessing details.

\paragraph{GTP Implementation}
For Grad-GTP, we use the time-varying likelihood in~\eqref{eq:dps_likelihood}.
To improve numerical stability, we normalize the guidance step size by the likelihood-gradient magnitude, \textit{i.e.}, proportional to $1/\|\nabla_{\xbm}g(\xbm,t)\|_2$.
We apply the same normalization to DPS for a controlled comparison.
For Prox-GTP, we adopt an efficient implementation that applies data consistency to the denoised estimate $\hat{\xbm}_0(\xbm,t)$ by solving
\begin{equation}
\label{eq:prox_gtp_default}
\xbm_\mathsf{dc}
=
\underset{\zbm}{\argmin}~
\frac{1}{2}\|\zbm-\hat{\xbm}_0\|_2^2
+
\lambda(1-t)\frac{1}{2\beta^2}
\|\Abm\zbm-\ybm\|_2^2.
\end{equation}
This quadratic proximal update can be efficiently solved using the conjugate gradient method, after which $\xbm_{\mathsf{dc}}$ is renoised through the forward diffusion process to obtain the next iterate.
We note that this approximation substantially reduces the computational cost and yields stronger performance.

We additionally consider a strict implementation of Prox-GTP using the same time-varying likelihood as Grad-GTP, whose proximal update is solved iteratively without the above approximation. GTP also offers empirical flexibility to incorporate algorithmic designs from existing diffusion inverse solvers. 
We demonstrate this flexibility through DAPS-GTP and DDS-GTP, which replace the unconditional diffusion priors in DAPS and DDS with the same translation model used by GTP. These variants introduce additional approximations and therefore fall outside the direct scope of our theoretical analysis. 
Implementation details and quantitative results for all these variants are provided in Supplement C.3 and D.2--D.3.

\begin{figure*}[t!]
    \centering
    \includegraphics[width=0.95\linewidth]{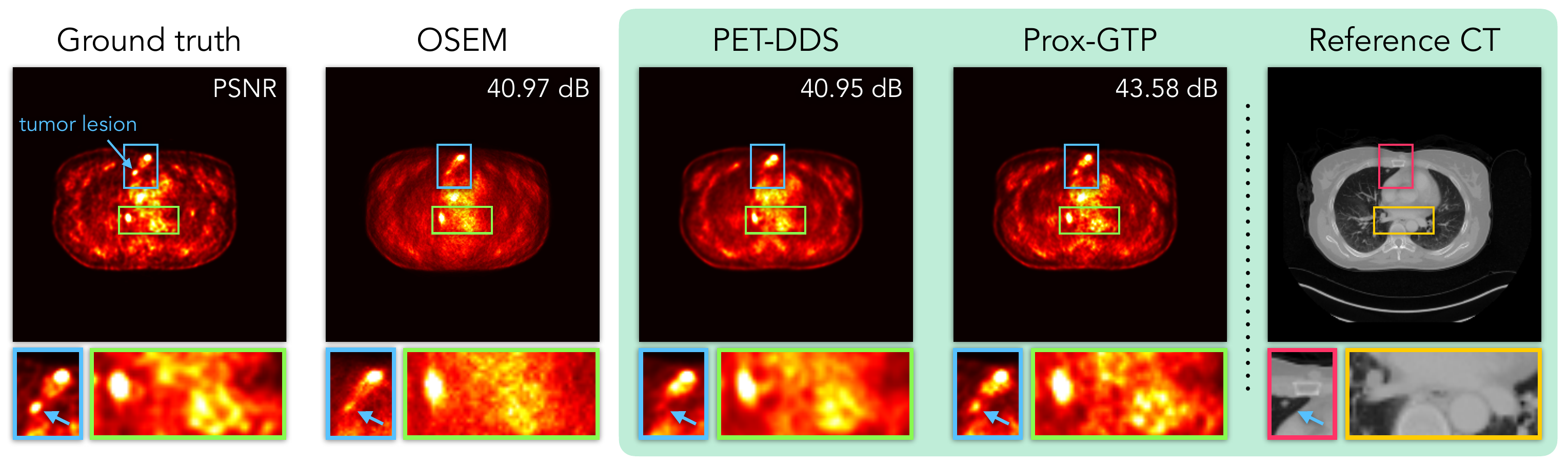}
    \caption{Visual comparison of low-dose PET reconstructions with $5\times10^5$ total counts. The shaded panels indicate methods that leverage the reference CT image during reconstruction. Zoomed-in regions highlight differences in fine-detail recovery. Note how Prox-GTP more clearly reconstructs the small tumor lesion, which is substantially less visible in the baseline reconstructions.}
    \label{fig:pet}
\end{figure*}

\paragraph{Sparse-View CT Results}

\Cref{tab:sv_ct_main} summarizes the average PSNR and SSIM for SV-CT reconstruction with $\{4, 8, 16, 32\}$ projection views, with projection angles equispaced over $[0^\circ,180^\circ)$.
The GTP methods perform particularly well in the more sparsely sampled settings.
At 4 views, Prox-GTP achieves the highest PSNR and SSIM, with a PSNR improvement of more than 2 dB over DDS, the strongest baseline without side-modality information.
As more projections become available, this performance margin gradually decreases, with GTP and the leading baselines achieving comparable performance at 16 and 32 views.
This trend, however, is consistent with the role of the translation prior: when the CT measurements are severely undersampled, the corresponding MR image provides substantial complementary information about the underlying anatomy; as the measurements become more informative, the additional benefit of this side information naturally diminishes.
\Cref{fig:sv_ct} provides a complementary visual comparison. 
While the unconditional methods generate plausible CT images, several anatomical structures around the kidneys are inconsistent with the ground truth, whereas the GTP methods better preserve the subject-specific anatomy available from the corresponding MR image.

Furthermore, the results also highlight the importance of how side-modality information is incorporated. 
While JTV also uses the corresponding MR image and achieves competitive PSNR and SSIM, its reconstructions in~\Cref{fig:sv_ct} are noticeably smoother and contain less anatomical detail. 
In contrast, GTP more faithfully recovers structures such as the kidney anatomy through a learned generative model of the cross-modal relationship. 
This observation highlights the advantage of GTP over hand-crafted cross-modal regularization in capturing complex inter-modal dependencies.

\paragraph{Limited-Angle CT}
\Cref{tab:la_ct_main} summarizes the average PSNR and SSIM for LA-CT reconstruction, where we consider an angular range of $\{30^\circ, 60^\circ, 90^\circ, 120^\circ\}$.
Projections are acquired at $1^\circ$ increments over $[0^\circ, n^\circ)$ for an $n$-degree angular range.
The GTP methods achieve the highest PSNR across all four settings, with their advantage becoming more pronounced as the angular coverage decreases.
For the most challenging $30^\circ$ setting, Prox-GTP outperforms all diffusion-based baselines by more than 7 dB in PSNR.
The improvement remains substantial at $90^\circ$, where Prox-GTP exceeds DDS, the closest competing method, by 1.92 dB.
In addition to the trend observed in SV-CT, LA-CT particularly benefits from cross-modal information because a contiguous range of projection angles is entirely unobserved, leaving the corresponding structural information inaccessible from the CT measurements alone.

\Cref{fig:la_ct} provides a complementary visual comparison for the $90^\circ$ setting.
Methods without side-modality information fail to recover major anatomical features that are poorly constrained by the acquired angular range, with particularly pronounced errors in bone structures.
In contrast, the GTP methods exploit the corresponding MR image to recover these structures with substantially higher anatomical fidelity.
JTV similarly benefits from access to the MR image, but its hand-crafted cross-modal regularization remains limited in recovering fine structures within the missing-angle regions.
The comparison further illustrates the advantage of GTP for extracting complementary anatomical information when the target measurements leave substantial structural information unobserved. 

\subsection{PET Reconstruction with CT Guidance}

Finally, we consider low-dose PET reconstruction with CT guidance, a realistic cross-modality setting where the two modalities are routinely acquired together and provide complementary functional and anatomical information.
Moreover, PET measurements follow Poisson statistics, allowing us to evaluate the applicability of GTP beyond Gaussian inverse problems.

We model the PET measurements as
$\ybm\sim\mathsf{Pois}(\Abm\xbm)$, where $\Abm$ denotes the PET forward operator.
Attenuation correction is incorporated into $\Abm$ using the corresponding CT image, following the standard piecewise-linear mapping from CT \textit{Hounsfield units (HU)} to attenuation coefficients~\cite{kinahan1998attenuation,shreve2011clinical}.
We implement the PET forward and adjoint operators using \texttt{parallelproj}~\cite{schramm2024parallelproj} and simulate a 2D scanner with a 12-sided polygonal detector ring, roughly approximating the geometry of the Siemens Biograph mCT scanner used to acquire the original data. We use the FDG-PET/CT Lesions dataset~\cite{gatidis2022whole}, which contains 900 paired PET/CT volumes with manual tumor-lesion segmentations, divided into 800/50/50 volumes for training, validation, and testing. All volumes are cropped to an axial resolution of $256\times256$; CT images are normalized to $[-1,1]$, while PET images are preprocessed following~\cite{singh2024score} to map most intensities to the same range. We use three randomly selected validation slices for hyperparameter tuning and evaluate on 50 test slices, including 30 slices containing tumor lesions and 20 randomly selected slices.

We consider two representative baselines for comparison: \textit{ordered subsets expectation maximization (OSEM)}~\cite{osem} and \textit{PET-DDS}~\cite{singh2024score}. 
The former represents a classical iterative PET reconstruction method, while the latter provides a diffusion-based comparison that also incorporates anatomical side information. Unlike GTP, which directly models the generative translation from CT to PET, PET-DDS uses the anatomical image as a conditioning input to a noise-to-data diffusion process. 
For our implementation of PET-DDS, we instantiate this conditional prior using a DDBM conditioned on the CT image with a standard Gaussian source distribution. 
We refer to Supplement C.4 for additional implementation details.

\begin{table}[t!]
    \centering
    \small
    \begin{tabular}{lcccccc}
    \toprule
    \multirow{2.4}{*}{Method\quad} & \multicolumn{2}{c}{$10^5$ counts} & \multicolumn{2}{c}{$5\times10^5$ counts} & \multicolumn{2}{c}{$10^6$ counts} \\ \cmidrule(lr){2-3} \cmidrule(lr){4-5} \cmidrule(lr){6-7}
     & PSNR & SSIM & PSNR & SSIM & PSNR & SSIM \\
    \midrule
    OSEM & 32.15 & 0.890 & 35.11 & 0.926 & 35.66 & 0.936 \\
    PET-DDS & \textbf{34.74} & \textbf{0.935} & \underline{36.24} & \underline{0.946} & \underline{36.42} & \underline{0.948} \\
    Prox-GTP & \underline{34.34} & \underline{0.934} & \textbf{37.34} & \textbf{0.951} & \textbf{38.66} & \textbf{0.959} \\
    \bottomrule
    \end{tabular}
    \caption{Average PSNR and SSIM results attained by the proposed GTP methods and baselines for three PET reconstruction settings. In each column, the best result is marked in \textbf{bold}, while the second best is \underline{underlined}. For reference, the translation model used by GTP attains a PSNR of 29.80 dB and an SSIM of 0.891 without measurement guidance.}
    \label{tab:quant_comp_pet}
\end{table}

\paragraph{Prox-GTP Implementation}
We focus on Prox-GTP for the PET experiments given its superior empirical performance over Grad-GTP in the preceding CT experiments.
We follow the same denoise--data-consistency--renoise structure as in the Gaussian settings.
At each sampling step, we compute the denoised estimate $\hat{\xbm}_0$ and apply data consistency by solving
\begin{equation}
\label{eq:pet_dds_prox}
\xbm_\mathsf{dc}
=
\underset{\zbm}{\argmin}~
\frac{1}{2}\|\zbm-\hat{\xbm}_0\|_2^2
+
\lambda(1-t)g(\zbm),
\end{equation}
where $g(\zbm)$ is given by
\begin{equation*}
g(\zbm)=\sum_i\left[(\Abm\zbm)_{(i)}-\ybm_{(i)}\log(\Abm\zbm)_{(i)}\right].
\end{equation*}
Here, index $(i)$ denotes the $i$th element of the vector.
As the Poisson likelihood can be expressed as a \textit{Bregman divergence}, we solve the nonquadratic proximal problem using Bregman proximal gradient descent. This approach exploits the structure of the Poisson likelihood and reduces each iteration to a gradient step involving $\Abm$ and $\Abm^T$, followed by a coordinate-wise Bregman proximal update. The resulting $\xbm_{\mathsf{dc}}$ is then renoised to the appropriate diffusion level.
We refer to Supplement C.5 for additional implementation details.

\paragraph{Reconstruction Results}
\Cref{tab:quant_comp_pet} summarizes the average PSNR and SSIM for PET reconstruction at three dose levels, corresponding to total expected counts of $10^5$, $5\times10^5$, and $10^6$. Both diffusion-based methods consistently outperform the classical OSEM baseline, with the improvement becoming particularly pronounced at lower count levels. Among the diffusion-based methods, Prox-GTP achieves the highest PSNR and SSIM at both $5\times10^5$ and $10^6$ counts, improving upon PET-DDS by 1.10 dB and 2.24 dB in PSNR, respectively. At the very-low-dose setting of $10^5$ counts, the two methods perform comparably, with PET-DDS exceeding Prox-GTP by only 0.40 dB in PSNR and 0.001 in SSIM. Moreover, compared with the translation model alone (29.80 dB), incorporating the PET measurements through GTP yields substantial improvements even at the lowest count level, highlighting the complementary roles of the cross-modal prior and measurement guidance.

\Cref{fig:pet} provides a complementary visual comparison at $5\times10^5$ counts, revealing differences in local structures that are less apparent from the global image-quality metrics. In particular, a small tumor lesion appears as a distinct hot spot in the ground-truth PET image. Prox-GTP faithfully recovers this lesion, whereas it is substantially less pronounced in the OSEM and PET-DDS reconstructions.

\section{Conclusion}

In this work, we introduced GTP, a principled framework for using generative translation models as expressive image priors for cross-modal image reconstruction. 
GTP achieves this by incorporating a time-varying measurement likelihood into the sampling process. 
We presented an analysis of how this alteration to the particle-level dynamics affects the evolution of the resulting probability density, revealing that injecting the likelihood in this manner leads to an intrinsic bias in the sampling procedure. 
We further show that this bias can be estimated without access to ground-truth information, yielding a reference-free criterion for assessing the accuracy of the recovered posterior.
We introduced two practical GTP algorithms based on different discretizations of the likelihood-adapted dynamics, Grad-GTP and Prox-GTP. We validated these proposed algorithms on multiple challenging cross-modality imaging tasks, including limited-angle and sparse-view CT reconstruction with MRI side information and low-dose PET reconstruction with CT side information. 
The empirical results demonstrate the benefits of the proposed approach over both conventional reconstruction methods and diffusion-based inverse solvers. 

\section*{Acknowledgments}
Y. Sun acknowledges support from the U.S. National Science Foundation (NSF) under Grant CCF-2542022. E. Bell is supported by the U.S. Department of
Energy, Office of Science, Office of Advanced Scientific Computing Research, Department of
Energy Computational Science Graduate Fellowship under Award No. DE-SC0026073. Y. Chen acknowledges support from the AMS-Simons Travel Grant and National Science Foundation award DMS-2608264.

\bibliographystyle{IEEEtran}
\bibliography{references}

\includepdf[pages=-]{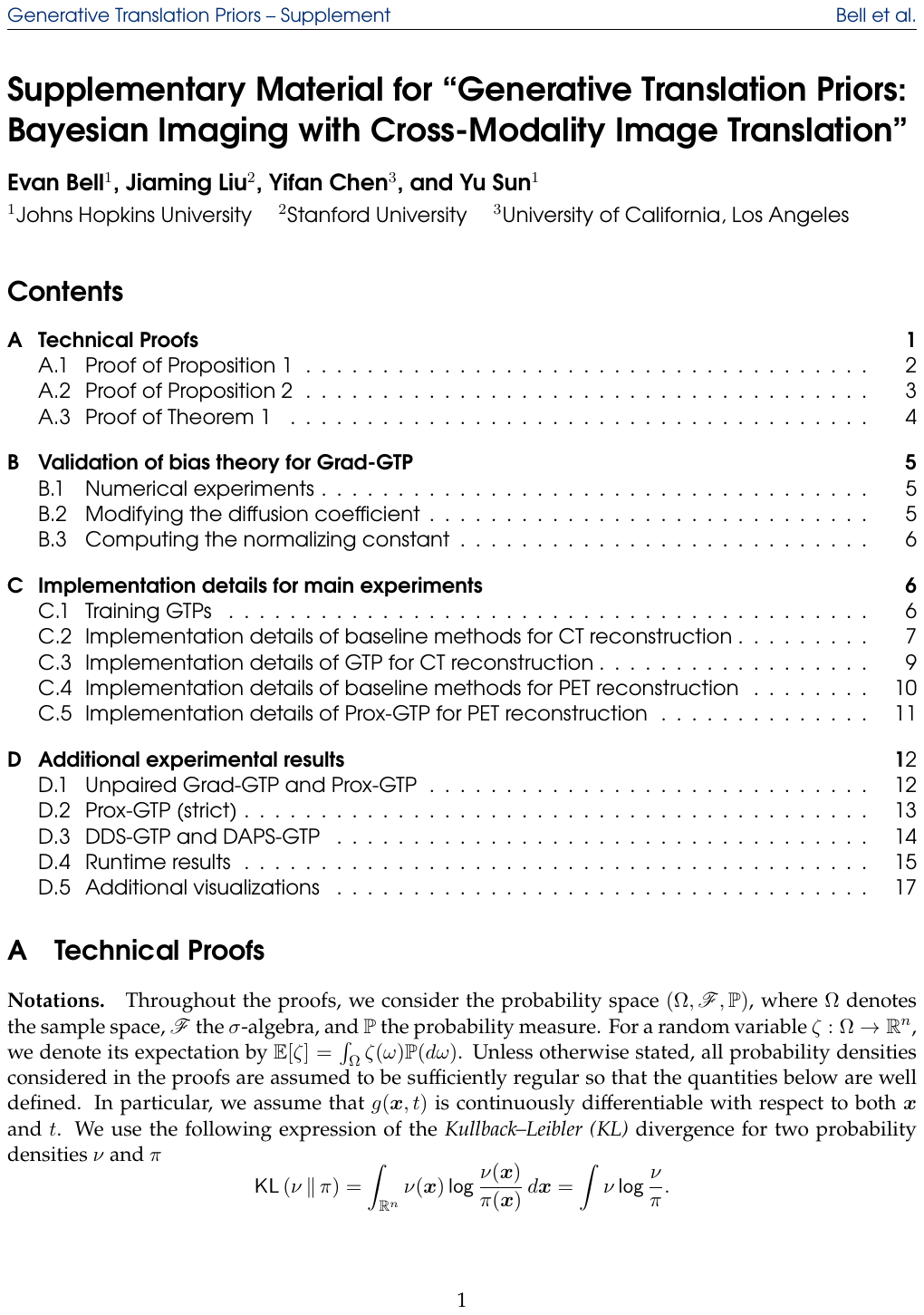}

\end{document}